\newtheorem{proposition}{Proposition}
\newtheorem{lemma}{Lemma}
\newenvironment{proof}{\noindent{\bf Proof:}}{\hfill\fbox{}\vspace*{1mm}}
\begin{document}
\title{\bf On Infectious Model for Dependent Defaults}
\author{Jia-Wen Gu
\thanks{Advanced Modeling and Applied Computing Laboratory,
Department of Mathematics, The University of Hong Kong,
Pokfulam Road, Hong Kong. Email:jwgu.hku@gmail.com.
}
\and Wai-Ki Ching
\thanks{Corresponding author. Advanced Modeling and Applied Computing Laboratory,
Department of Mathematics, The University of Hong Kong, Pokfulam
Road, Hong Kong. E-mail: wching@hku.hk. Research supported in
part by RGC Grants 7017/07P, HKU CRCG Grants
and HKU Strategic Research Theme Fund on Computational Physics and Numerical Methods.}
\and Tak-Kuen Siu
\thanks{ Department of Applied Finance and Actuarial Studies,
Faculty of Business and Economics, Macquarie University,
Macquarie University, Sydney, NSW 2109, Australia. Email: ken.siu@mq.edu.au,
ktksiu2005@gmail.com,}
\and Harry Zheng
\thanks{ Department of Mathematics,
Imperial College, London, SW7 2AZ, UK. Email: h.zheng@imperial.ac.uk.}
}
\date{}
\maketitle

\begin{abstract}

In this paper, we propose a two-sector Markovian infectious model, which is an extension of Greenwood's model.
The central idea of this model is that the causality of defaults of two sectors is in both direction, which enrich dependence dynamics.
The Bayesian Information Criterion is adopted to compare the proposed model with the two-sector model in credit literature using the real data.
We find that the newly proposed model is statistically better than the model in past literature.
We also introduce two measures: CRES and CRVaR to give risk evaluation of our model.

\end{abstract}

{\bf Keywords:}
Contagion Model, Markov Chain, Two-sector Model, Risk Management, Causality.

\section{Introduction}

Modeling dependent default risk has been a key issue in credit risk modeling.
There are two important approaches to model the dependent default risk.
The structural firm model has its origin in Merton (1974) and Black and Scholes (1973), which models the relationship between the firm's asset value and the defaults.
The reduced-form intensity-based model by Jarrow and Turnbull (1995) use Poisson jump processes to model the default event.

Copula has been a very popular tool in modeling the dependent risk.
The idea of Copula is transforming the marginal variables to uniform variables by a simple transformation.
After this is done, a n-dimensional function is used to model the dependence of the uniform variables,
which is so called the Copula function.
The Copula helps us to deal with the multivariate distribution of the uniform variable, without consideration of the
original marginal variables.
There are many useful Copulas in finance. The Gaussian Copula, which is introduced by Li (2000),
is widely used in risk modeling and financial assessment.

In addition, conditional independence model is also a commonly used model in credit risk modeling. Conditional on the systematical common factor, the loss random variables are independent. To specify, the Bernoulli mixture model is followed by the $CreditMetrics$ and $KMV$-model, while the Poisson mixture model is followed by the $CreditRisk^+$ model.
In a recession, the default of one company is triggered by the underlying common risk factor and also by the related company's defaults.
The contagion model is used to describe how the credit event of one company affects the other companies.
Davis and Lo (2001) introduce an infectious default model, where in a portfolio a bond may be infected by defaults of other bonds or default directly.
Jarrow and Yu (2001)  propose a reduced-form model to describe the defaultable bonds of different company, where the concept of counterparty risk is first introduced to the credit literature.

Ching et al. (2008) introduce an infectious default model based on
the idea of Greenwood's model considered in Daley and Gani (1999) .
This model aims at modeling the impact of
default of a bond on the likelihood of defaults of other bonds.
The original version of Greenwood's model is a one-sector model.
It is then extended to a two-sector model in Ching et al. (2008).
Besides, the joint probability distribution function
for the duration of a default crisis, (ie, the default cycle),
and the severity of defaults during the crisis period
was also derived. Two concepts, namely, the Crisis Value-at-Risk (CRVaR)
and the Crisis Expected Shortfall (CRES), are also used
to assess the impact of a default crisis.
The Greenwood's model is also extended to a network of sectors in
Ching et al. (2010). Gu et al. (2011) propose a Markovian infectious
model to describe the dependent relationship of default processes of
credit securities based on Ching et al. (2008, 2010),
where the central idea is the concept of
common shocks which is one of the major approaches to
describe insurance risk.

In this paper, we propose a two-sector Markovian infectious model, where
the future default probability switching over time depends on the
current number of defaults of both sectors.
Moreover, the defaults of sector A caused by th defaults of sector B, and vice versa.
The causality of defaults in both direction is captured by the underlying switched default probability.
We adopt the maximum likelihood method to estimate the parameters and the Bayesian Information Criterion to compare the propose model with two-sector model considered in Ching et al. (2008). The experiment result shows that the proposed model outperforms the model in credit literature.
{In addition, a more general model
is given to provide more flexibility in describing
realistic features of the dynamics of default probabilities.}

This paper is structured as follows.
Section 2 presents our proposed model.
And we also derive a recursive formula for the joint probability
distribution for the default cycle and the number of defaults during
the crisis and outline the estimation procedure.
Section 3 presents the ideas of the CRVaR and the CRES.
In Section 4, we present the results of empirical analysis using our proposed model.
{Section 5 gives the general model extending the proposed model
in Section 2. The final section concludes the paper.}
\vspace{5mm}

\section{The Basic Model}

Let ${\cal T}$ be the time index set $\{0, 1, 2, \ldots, \}$ of our model.
To model the uncertainty, we consider a probability space $(\Omega, {\cal F}, {\cal P})$, where ${\cal P}$
is a real-world probability. Suppose that
$$
X := \{ {X_t} \}_{t \in {\cal T}} \quad {\rm and} \quad Y := \{ {Y_t} \}_{t \in {\cal T}}
$$
denote two stochastic processes on $(\Omega, {\cal F}, {\cal P})$, where $X_t = ( {X^1_t}, {X^2_t} )$
and $Y_t = ( {Y^1_t}, {Y^2_t} )$ represent the numbers of surviving bonds and the defaulted bonds at
$t \in {\cal T}$ in sector A and sector B, respectively, e.g., $X^1_t$ represents the the number of surviving bonds
at time t in sector A.
We assume that the initial conditions are given as follows:
$$
X_0 = (x^1_0 , x^2_0) ,\quad Y_0 = (y^1_0 , y^2_0) \quad {\rm and} \quad
x^1_0 + y^1_0 = N_1 ,\quad x^2_0 + y^2_0 = N_2 \ .
$$
Note that for each $t \in {\cal T}$, the sum of the numbers of the
defaulted bonds and the surviving bonds at the time epoch $t+1$ must
equal the number of surviving bonds at time $t$ in every sector, i.e.,
\begin{equation}\label{xy}
X^1_{t+1}+Y^1_{t+1} = X^1_t \quad {\rm and} \quad  X^2_{t+1}+Y^2_{t+1} = X^2_t .
\end{equation}
For each $t\in {\cal T}$,
let ${\alpha}_t$ and ${\beta}_t$ be the probability that the default of
a surviving bond is infected by the defaulted bonds at time $t$ in sector A and sector B, respectively.
The joint probability distribution
of $\{X_{t+1}, Y_{t+1}\}$ given $\{X_t, Y_t\}$ is given by
the following Binomial probability:
\begin{eqnarray}\label{prob}
\begin{array}{lll}
p_{(x_t,y_t)} (x_{t+1},y_{t+1})  & =&
P\{(X_{t+1}, Y_{t+1}) = (x_{t+1}, y_{t+1}) \mid (X_t, Y_t) = (x_t, y_t)\} \\
&=&  \left(
\begin{array}{c}
x^1_t \\
y^1_{t+1}
\end{array}
\right) {(\alpha}_t)^{y^1_{t+1}} (1-{\alpha}_t)^{x^1_{t+1}}
\times \left(
\begin{array}{c}
x^2_t \\
y^2_{t+1}
\end{array}
\right) {(\beta}_t)^{y^2_{t+1}} (1-{\beta}_t)^{x^2_{t+1}}.\\
\end{array}
\end{eqnarray}
We consider here the situation that the joint future default probability
depends on the current number of defaulted bonds of both industrial sectors.
We assume that
\begin{eqnarray}
\begin{array}{llll}
{\alpha}_t &=& a(y_t) \\
&=&\left\{
\begin{array}{llll}
a_0 & {\rm if} & y^1_t=y^2_t=0\\
a_1 & {\rm if} & y^1_t>0, y^2_t=0\\
a_2 & {\rm if} & y^1_t=0, y^2_t>0\\
a_3 & {\rm if} & y^1_t>0, y^2_t>0
\end{array}
\right.\\
&=& a_0h_0(y^1_t, y^2_t)+a_1h_1(y^1_t, y^2_t)+a_2h_2(y^1_t, y^2_t)+a_3h_3(y^1_t, y^2_t)
\end{array}
\end{eqnarray}
and
\begin{eqnarray}
\begin{array}{lll}
{\beta}_t &=& b(y_t) \\
&=&\left\{
\begin{array}{llll}
b_0 & {\rm if} & y^1_t=y^2_t=0\\
b_1 & {\rm if} & y^1_t=0, y^2_t>0\\
b_2 & {\rm if} & y^1_t>0, y^2_t=0\\
b_3 & {\rm if} & y^1_t>0, y^2_t>0
\end{array}
\right.\\
&=&b_0h_0(y^2_t,y^1_t)+b_1h_1(y^2_t, y^1_t)+b_2h_2(y^2_t, y^1_t)+b_3h_3(y^2_t, y^1_t),
\end{array}
\end{eqnarray}
where
$$
h_0(x,y)=\left\{
\begin{array}{ll}
1 \ & {\rm if} \ x=y=0\\
0 \ & {\rm otherwise}
\end{array}
\right.
, \quad
h_1(x,y)=\left\{
\begin{array}{ll}
1 \ & {\rm if} \ x>0, y=0\\
0 \ & {\rm otherwise}
\end{array}
\right.
$$
and
$$
h_2(x,y)=\left\{
\begin{array}{ll}
1 \ & {\rm if} \ x=0, y>0\\
0 \ & {\rm otherwise}
\end{array}
\right.
 , \quad
h_3(x,y)=\left\{
\begin{array}{ll}
1 \ & {\rm if} \ x>0, y>0\\
0 \ & {\rm otherwise}.
\end{array}
\right.
$$

As it is shown in Equation ({\ref {xy}})({\ref {prob}}), one can see that $\{X_t, t = 0, 1, 2, \ldots\}$ is a second-order Markov chain process.
{
We remark that this two-sector model provides a novel and flexible dependent structure for
correlated defaults of two different industrial sectors.
Firstly, an infectious default within one time period is modeled as a
Binomial distribution, which has been widely used in modeling the spread
of epidemics whose situation is quite similar to that of a financial crisis.
The causality of the infection is supposed to be in both direction, i.e.,
a ``looping default''.
Secondly, the process $(X_t, Y_t)$ has the Markov property,
where the probabistic structure of future states only depend
on the current state. Thirdly, conditioning on the current state $(X_t, Y_t)$,
the future state of two sectors $(X^1_{t+1}, Y^1_{t+1})$
and $(X^2_{t+1}, Y^2_{t+1})$ are stochastically independent.
The step functions $h_i(x,y)$ are used to describe the dependence
of the default probabilities on the state of previous time epoch.
On one hand, this method provides a tractable and analytic solution
for parameter estimation from empirical data.
On the other hand, one has to admit that this simplicity may
result in limitations in applications.
In Section 5, we relax the assumption of
the specific form for $\alpha_t$ and $\beta_t$ and
a more complicated dependent structure modelling framework
is presented.}

\subsection{Default Cycle and Severity}
In this subsection, we proceed to derive the joint probability distribution function (p.d.f) for the duration of the default crisis ($T$), namely, the default cycle, and the severity of the defaults ($W_T$) during the crisis period.
{ These two concepts are essential in determining the impact of a default crisis.
We first give a precise definition of the default cycle:}
\begin{equation}
T := {\rm inf}\{t \in {\cal T} \mid Y_t = 0\}.
\end{equation}
And given $T = t > 0$, $W_t$ represents the number of defaults in the sector over the time duration $(0, t]$.
To apply the concepts of default cycle and the severity of the defaults on our proposed two-sector model, we write
$$
T_1 := {\rm inf}\{t \in {\cal T} \mid Y^1_t = 0\} \quad {\rm and} \quad T_2 := {\rm inf}\{t \in {\cal T} \mid Y^2_t = 0\}.
$$
{
Provided that $T_1=t_1>0$ and $T_2=t_2>0$, $W^1_{t_1}$ and $W^2_{t_2}$
represent the number of defaults in sector A and sector B respectively in
$(0, t_1]$ and $(0, t_2]$.
To obtain the joint distribution of $(W^i_{T_i}, T_i)$ for $i=1, 2$,
we assume that $(X_0, Y_0) =(x_0, y_0)$ with $y^1_0>0, y^2_0>0$.
Let
$$
P_n(x_1,x_2,h)=P\{T_1 \geq n+1, X^1_n=x_1, X^2_n=x_2, I_{\{Y^2_n >0\}}=h\}.
$$
The following Lemma gives recursive formulas for $P_n(x_1,x_2,h)$.
\begin{lemma}
$$
\begin{array}{ll}
P_n(x_1,x_2,0)=& \displaystyle \sum_{s_1>x_1}{s_1 \choose x_1} \left[ P_{n-1}(s_1,x_2,0)(a_1)^{s_1-x_1}(1-a_1)^{x_1}(1-b_2)^{x_2}\right.\\
&\displaystyle \left.+P_{n-1}(s_1,x_2,1)(a_3)^{s_1-x_1}(1-a_3)^{x_1}(1-b_3)^{x_2} \right]\\
 P_n(x_1,x_2,1)=& \displaystyle \sum_{s_1>x_1}\sum_{s_1>x_1}{s_1 \choose x_1} {s_2 \choose x_2}\left[ P_{n-1}(s_1,s_2,0)(a_1)^{s_1-x_1}(1-a_1)^{x_1}(b_2)^{s_2-x_2}(1-b_2)^{x_2}\right.\\
&\displaystyle \left.+P_{n-1}(s_1,s_2,1)(a_3)^{s_1-x_1}(1-a_3)^{x_1}(b_3)^{s_2-x_2}(1-b_3)^{x_2} \right]
\end{array}
$$
where the initial condition is given by
$$
P_0(x_1,x_2,h)=\left\{
\begin{array}{ll}
1, & (x_1,x_2,h)=(x^1_0,x^2_0,1)\\
0, & {\rm otherwise}.
\end{array}
\right.
$$
\end{lemma}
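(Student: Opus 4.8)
The plan is to prove both recursions by a single last-step conditioning argument that exploits the Markov structure of the pair process $(X_t,Y_t)$. The key structural observations are: although $\{X_t\}$ alone is only a second-order Markov chain, the augmented process $(X_t,Y_t)$ is first-order Markov, with one-step kernel given by the product Binomial law (\ref{prob}); and this kernel depends on the current defaulted counts $y_t=(y^1_t,y^2_t)$ only through the two signs $\mathrm{sign}(y^1_t),\mathrm{sign}(y^2_t)$, because $\alpha_t=a(y_t)$ and $\beta_t=b(y_t)$ are the step functions built from $h_0,\dots,h_3$. This is precisely what will allow the coarse descriptor $(X^1_n,X^2_n,I_{\{Y^2_n>0\}})$ to carry enough information for a closed recursion.

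First I would peel off the last transition. Since $\{T_1\ge n+1\}=\{T_1\ge n\}\cap\{Y^1_n>0\}$, writing $E_{n-1}(s_1,s_2,h'):=\{T_1\ge n,\,X^1_{n-1}=s_1,\,X^2_{n-1}=s_2,\,I_{\{Y^2_{n-1}>0\}}=h'\}$ and $A:=\{Y^1_n>0,\,X^1_n=x_1,\,X^2_n=x_2,\,I_{\{Y^2_n>0\}}=h\}$, I partition over the time-$(n-1)$ descriptor:
\[
P_n(x_1,x_2,h)=\sum_{s_1,s_2,h'} P\big(E_{n-1}(s_1,s_2,h')\cap A\big).
\]
Here $P\big(E_{n-1}(s_1,s_2,h')\big)=P_{n-1}(s_1,s_2,h')$ by definition, so the task reduces to evaluating the conditional probability of $A$ given $E_{n-1}(s_1,s_2,h')$. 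On $\{T_1\ge n\}$ one has $Y^1_{n-1}>0$, so the sign of $y^1_{n-1}$ is fixed at positive, while $h'$ fixes the sign of $y^2_{n-1}$; hence $(\alpha_{n-1},\beta_{n-1})=(a_1,b_2)$ when $h'=0$ and $(\alpha_{n-1},\beta_{n-1})=(a_3,b_3)$ when $h'=1$, and $X^1_{n-1}=s_1,X^2_{n-1}=s_2$ fix the Binomial sizes. Consequently the one-step probability of $A$ is constant over $E_{n-1}(s_1,s_2,h')$ and factors out, which is exactly the point that legitimizes recursing on the reduced state.

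Next I would translate the constraints in $A$ into index restrictions and read the Binomial factors off (\ref{prob}). Because $Y^1_n=X^1_{n-1}-X^1_n=s_1-x_1$, the condition $Y^1_n>0$ forces $s_1>x_1$ and contributes the sector-A factor ${s_1 \choose x_1}(\alpha_{n-1})^{s_1-x_1}(1-\alpha_{n-1})^{x_1}$. For sector B the behaviour is keyed to $h$: if $h=0$ then $Y^2_n=s_2-x_2=0$ forces $s_2=x_2$, collapsing the $s_2$-sum and leaving the factor $(1-\beta_{n-1})^{x_2}$; if $h=1$ then $Y^2_n>0$ forces $s_2>x_2$ and gives ${s_2 \choose x_2}(\beta_{n-1})^{s_2-x_2}(1-\beta_{n-1})^{x_2}$. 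Substituting the two values of $(\alpha_{n-1},\beta_{n-1})$ indexed by $h'\in\{0,1\}$ and summing reproduces the two bracketed terms of each displayed formula. The initial condition is immediate from the definition at $n=0$: since $X_0=(x^1_0,x^2_0)$ is deterministic with $y^1_0>0$ (so $T_1\ge1$ holds automatically) and $y^2_0>0$ (so $I_{\{Y^2_0>0\}}=1$), all unit mass of $P_0$ sits at $(x^1_0,x^2_0,1)$.

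The step I expect to be the main obstacle is the factoring claim of the second paragraph: one must check carefully that the one-step transition probability into $A$ is genuinely constant across every outcome compatible with the descriptor $(s_1,s_2,h')$, even though this descriptor does not pin down $y_{n-1}$ exactly. This relies entirely on the step-function form of $\alpha_t,\beta_t$ together with the fact that $Y^1_{n-1}>0$ on $\{T_1\ge n\}$, and it is the one place where the passage from the first-order kernel on $(X,Y)$ to a recursion on the coarser variable $(X^1,X^2,I_{\{Y^2>0\}})$ has to be justified rather than computed. Everything else is bookkeeping of Binomial coefficients and exponents.
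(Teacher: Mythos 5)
Your proposal is correct and follows essentially the same route as the paper's proof: a last-step conditioning (law of total probability) on the descriptor $(X^1_{n-1},X^2_{n-1},I_{\{Y^2_{n-1}>0\}})$, the identification $\{T_1\ge n+1\}=\{T_1\ge n\}\cap\{Y^1_n>0\}$, and the observation that on $\{T_1\ge n\}$ the step-function form of $\alpha_t,\beta_t$ makes the one-step kernel depend only on $(s_1,s_2,h')$. Your explicit justification of the factoring step is in fact more careful than the paper's, which simply invokes the Markov property at that point.
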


\begin{proof}
By the law of total probability and Markov property,
$$
\begin{array}{lll}
 & P_n(x_1,x_2,0)\\
=& P\{T_1 \geq n+1, X^1_n=x_1, X^2_n=x_2, I_{\{Y^2_n >0\}}=0\}\\
=& {\displaystyle \sum_{s_1>x_1}\sum_{h=0,1}}P\{T_1 \geq n, X^1_{n-1}=s_1, X^2_{n-1}=x_2, I_{\{Y^2_{n-1} >0\}}=h\}\\
& \times P\{T_1 \geq n+1, X^1_n=x_1, X^2_n=x_2, I_{\{Y^2_n >0\}}=0 \mid T_1 \geq n, X^1_{n-1}=s_1, X^2_{n-1}=x_2, I_{\{Y^2_{n-1} >0\}}=h\}\\
=&{\displaystyle \sum_{s_1>x_1}\sum_{h=0,1}} P_{n-1}(s_1,x_2,h)\\
 & \times P\{Y^1_n>0, X^1_n=x_1, X^2_n=x_2, I_{\{Y^2_n >0\}}=0 \mid T_1 \geq n, X^1_{n-1}=s_1, X^2_{n-1}=x_2, I_{\{Y^2_{n-1} >0\}}=h\}\\
 =&{\displaystyle \sum_{s_1>x_1}\sum_{h=0,1}} P_{n-1}(s_1,x_2,h)\\
 & \times P\{Y^1_n>0, X^1_n=x_1, X^2_n=x_2, I_{\{Y^2_n >0\}}=0 \mid Y^1_{n-1}>0, X^1_{n-1}=s_1, X^2_{n-1}=x_2, I_{\{Y^2_{n-1} >0\}}=h\}\\
=&{\displaystyle \sum_{s_1>x_1}} {s_1 \choose x_1} \left[ P_{n-1}(s_1,x_2,0)(a_1)^{s_1-x_1}(1-a_1)^{x_1}(1-b_2)^{x_2}\right.\\
&\displaystyle \left.+P_{n-1}(s_1,x_2,1)(a_3)^{s_1-x_1}(1-a_3)^{x_1}(1-b_3)^{x_2} \right]
\end{array}
$$
Similarly, we have
$$
\begin{array}{lll}
 &P_n(x_1,x_2,1)\\
=& P\{T_1 \geq n+1, X^1_n=x_1, X^2_n=x_2, I_{\{Y^2_n >0\}}=1\}\\
=& {\displaystyle \sum_{s_1>x_1}\sum_{s_2>x_2}\sum_{h=0,1}}P\{T_1 \geq n, X^1_{n-1}=s_1, X^2_{n-1}=s_2, I_{\{Y^2_{n-1} >0\}}=h\}\\
& \times P\{T_1 \geq n+1, X^1_n=x_1, X^2_n=x_2, I_{\{Y^2_n >0\}}=1 \mid T_1 \geq n, X^1_{n-1}=s_1, X^2_{n-1}=s_2, I_{\{Y^2_{n-1} >0\}}=h\}\\
=&{\displaystyle \sum_{s_1>x_1}\sum_{s_2>x_2}\sum_{h=0,1}} P_{n-1}(s_1,s_2,h)\\
 & \times P\{Y^1_n>0, X^1_n=x_1, X^2_n=x_2, I_{\{Y^2_n >0\}}=1 \mid T_1 \geq n, X^1_{n-1}=s_1, X^2_{n-1}=s_2, I_{\{Y^2_{n-1} >0\}}=h\}\\
 =&{\displaystyle \sum_{s_1>x_1}\sum_{s_2>x_2}\sum_{h=0,1}} P_{n-1}(s_1,s_2,h)\\
 & \times P\{Y^1_n>0, X^1_n=x_1, X^2_n=x_2, I_{\{Y^2_n >0\}}=1 \mid Y^1_{n-1}>0, X^1_{n-1}=s_1, X^2_{n-1}=s_2, I_{\{Y^2_{n-1} >0\}}=h\}\\
=&{\displaystyle \sum_{s_1>x_1}\sum_{s_1>x_1}} {s_1 \choose x_1} {s_2 \choose x_2}\left[ P_{n-1}(s_1,s_2,0)(a_1)^{s_1-x_1}(1-a_1)^{x_1}(b_2)^{s_2-x_2}(1-b_2)^{x_2}\right.\\
& \displaystyle \left.+P_{n-1}(s_1,s_2,1)(a_3)^{s_1-x_1}(1-a_3)^{x_1}(b_3)^{s_2-x_2}(1-b_3)^{x_2} \right]
\end{array}
$$

\end{proof}
\begin{proposition}
The joint distribution of $(T_1,W^1_{T_1} )$ is given by
$$
P\{(T_1, W^1_{T_1})=(n,x)\}=\sum_{x_2}P_{n-1}(x^1_0-x, x_2,0)(1-a_1)^{x^1_0-x}+\sum_{x_2}P_{n-1}(x^1_0-x, x_2,1)(1-a_3)^{x^1_0-x}.
$$
\end{proposition}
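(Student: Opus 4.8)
The plan is to express the event $\{T_1=n,\,W^1_{T_1}=x\}$ entirely in terms of sector-A quantities at times $n-1$ and $n$, condition on the full state at time $n-1$ so as to recognise the factor $P_{n-1}$, and then read off the remaining one-step transition probability from the Binomial law (\ref{prob}).

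First I would translate the target event into the variables appearing in $P_{n-1}$. Since $y^1_0>0$ we have $T_1\geq 1$, and for $n\geq 1$ the event $\{T_1=n\}$ equals $\{T_1\geq n\}\cap\{Y^1_n=0\}$. By (\ref{xy}) we have $Y^1_n=X^1_{n-1}-X^1_n$, so $\{Y^1_n=0\}=\{X^1_n=X^1_{n-1}\}$. Because surviving bonds can only default, the cumulative number of sector-A defaults by time $n$ is $x^1_0-X^1_n$, and on $\{T_1=n\}$ this equals $W^1_{T_1}$. Hence $\{T_1=n,\,W^1_{T_1}=x\}=\{T_1\geq n\}\cap\{X^1_{n-1}=X^1_n=x^1_0-x\}$.

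Next I would decompose by the time-$(n-1)$ state. Writing $m=n-1$ in the definition of $P_m$, the law of total probability and the Markov property give
$$
P\{T_1=n,\,W^1_{T_1}=x\}=\sum_{x_2}\sum_{h=0,1}P_{n-1}(x^1_0-x,\,x_2,\,h)\,Q_h,
$$
where $Q_h$ is the one-step probability of $\{Y^1_n=0\}$ given the time-$(n-1)$ state with $X^1_{n-1}=x^1_0-x$ and $I_{\{Y^2_{n-1}>0\}}=h$. Since $\{Y^1_n=0\}$ concerns only sector A, the conditional independence of the two sectors' one-step transitions lets me identify $Q_h$ with the sector-A marginal of (\ref{prob}), leaving the sector-B transition unconstrained; this is what permits the free summation over $x_2$. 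On $\{T_1\geq n\}$ we have $Y^1_{n-1}>0$, so the specification of $\alpha_{n-1}$ gives $\alpha_{n-1}=a_1$ when $h=0$ and $\alpha_{n-1}=a_3$ when $h=1$. The event $\{Y^1_n=0\}$ means none of the $x^1_0-x$ surviving sector-A bonds is infected, so $Q_0=(1-a_1)^{x^1_0-x}$ and $Q_1=(1-a_3)^{x^1_0-x}$. Substituting and writing out the two values of $h$ yields the claimed formula.

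The main obstacle is justifying that $Q_h$ depends on the time-$(n-1)$ state only through the indicator $h$ and not through the sector-B surviving count $x_2$, which is exactly what allows the common factors $(1-a_1)^{x^1_0-x}$ and $(1-a_3)^{x^1_0-x}$ to be pulled out of the sum over $x_2$. This rests on two features built into the model: $\alpha_{n-1}$ is a function of the default indicators alone, and the sector-A factor of (\ref{prob}) involves $X^1_{n-1}$ and $\alpha_{n-1}$ but not the sector-B count. Care is also needed to argue that leaving $Y^2_n$ unconstrained is legitimate, since the target event imposes no condition on sector B at time $n$; the conditional-independence property makes this step clean.
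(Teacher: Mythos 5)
Your proposal is correct and follows essentially the same route as the paper's proof: rewrite the event as $\{T_1\ge n,\ Y^1_n=0,\ X^1_n=x^1_0-x\}$, decompose over the time-$(n-1)$ state via the law of total probability and the Markov property to recognise $P_{n-1}(x^1_0-x,x_2,h)$, and evaluate the one-step sector-A probability as $(1-a_1)^{x^1_0-x}$ or $(1-a_3)^{x^1_0-x}$ according to $h$. Your added justification that this one-step factor is independent of $x_2$ (so it pulls out of the sum) is a point the paper leaves implicit, but it is the same argument.
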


\begin{proof}
$$
\begin{array}{lll}
 &P\{(T_1,W^1_{T_1})=(n,x)\}\\
=&P\{T_1 \geq n, Y^1_n=0, X^1_n=x^1_0-x\}\\
=& {\displaystyle \sum_{x_2}\sum_{h=0,1}}P\{T_1 \geq n, X^1_{n-1}=x^1_0-x, X^2_{n-1}=x_2, I_{\{Y^2_{n-1} >0\}}=h\}\\
& \times P\{ Y^1_n=0, X^1_n=x^1_0-x \mid T_1 \geq n, X^1_{n-1}=x^1_0-x, X^2_{n-1}=x_2, I_{\{Y^2_{n-1} >0\}}=h\}\\
=& {\displaystyle \sum_{x_2}\sum_{h=0,1}}P_{n-1}(x^1_0-x,x_2,h)\\
& \times P\{ Y^1_n=0, X^1_n=x^1_0-x \mid Y^1_{n-1}>0, X^1_{n-1}=x^1_0-x, X^2_{n-1}=x_2, I_{\{Y^2_{n-1} >0\}}=h\}\\
=&\sum_{x_2}P_{n-1}(x^1_0-x, x_2,0)(1-a_1)^{x^1_0-x}+\sum_{x_2}P_{n-1}(x^1_0-x, x_2,1)(1-a_3)^{x^1_0-x}
\end{array}
$$
\end{proof}

We remark that due to the symmetric property of the two sectors, the joint distribution $(W^2_{T_2}, T_2)$ shares a similar form of $(W^1_{T_1}, T_1)$.
}
\subsection{Parameter Estimation}
This two-sector model has eight parameters: $a_0$, $a_1$, $a_2$, $a_3$ and $b_0$, $b_1$, $b_2$, $b_3$. We employ the maximum likelihood method to estimate the parameters. Given the total bonds $N_1$, $N_2$ and the observations of the number of defaulted bonds $y^1_0, y^1_1, \ldots, y^1_N$ and
$y^2_0, y^2_1, \ldots, y^2_N$, where $N$ denotes the period of observation time, the number of surviving binds $x^1_0, x^1_1, \ldots, x^1_N$ and $x^2_0, x^2_1, \ldots, x^2_N$ are deterministic.
The following proposition gives analytical expressions for the maximum likelihood estimates of the model parameters.
\begin{proposition}
For $i=0, 1, 2, 3$,
$$
\hat{a}_i=\frac{\sum\limits_{t=0}^{N-1}y^1_{t+1}h_i(y^1_t, y^2_t)}{\sum\limits_{t=0}^{N-1}x^1_{t}h_i(y^1_t, y^2_t)}
\quad {\rm and} \quad
\hat{b}_i=\frac{\sum\limits_{t=0}^{N-1}y^2_{t+1}h_i(y^2_t, y^1_t)}{\sum\limits_{t=0}^{N-1}x^2_{t}h_i(y^2_t, y^1_t)}
.
$$
\end{proposition}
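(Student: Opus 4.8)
The plan is to form the full likelihood as the product of the one-step transition kernels (\ref{prob}) over $t=0,\dots,N-1$, which is justified because $\{(X_t,Y_t)\}$ is Markov and the data $(x^i_t,y^i_t)$ are observed, and then to maximise its logarithm over the eight parameters. Writing $\ell$ for the log-likelihood and discarding the binomial coefficients (which are constants free of the parameters), the key structural observation is that $\alpha_t$ involves only the $a_i$ while $\beta_t$ involves only the $b_i$, and these enter (\ref{prob}) in separate multiplicative factors. Hence $\ell$ splits additively as $\ell = \ell_A(a_0,a_1,a_2,a_3) + \ell_B(b_0,b_1,b_2,b_3) + {\rm const}$, where
$$\ell_A = \sum_{t=0}^{N-1}\big[\, y^1_{t+1}\log\alpha_t + x^1_{t+1}\log(1-\alpha_t)\,\big]$$
and $\ell_B$ is the analogous sum over sector B. The two blocks can therefore be optimised independently.

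Next I would exploit the fact that the indicators $h_0,h_1,h_2,h_3$ form a partition: for each $t$ exactly one $h_i(y^1_t,y^2_t)$ equals $1$, so $\alpha_t = a_i$ precisely on the time indices where $h_i(y^1_t,y^2_t)=1$. Grouping the sum defining $\ell_A$ by regime and substituting the balance identity $x^1_{t+1}=x^1_t - y^1_{t+1}$ from (\ref{xy}), $\ell_A$ decouples into four independent one-parameter pieces,
$$\ell_A = \sum_{i=0}^{3}\Big[\, U_i\log a_i + (V_i-U_i)\log(1-a_i)\,\Big],$$
where $U_i := \sum_{t=0}^{N-1} y^1_{t+1}h_i(y^1_t,y^2_t)$ and $V_i := \sum_{t=0}^{N-1} x^1_t h_i(y^1_t,y^2_t)$, using that $\sum_t x^1_{t+1}h_i = V_i - U_i$. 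Each summand depends on a single $a_i$, so the eight-dimensional maximisation reduces to eight scalar Bernoulli-type problems.

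Finally I would maximise each scalar piece. Differentiating $U_i\log a_i+(V_i-U_i)\log(1-a_i)$ in $a_i$ and setting the derivative to zero gives $U_i(1-a_i)=(V_i-U_i)a_i$, i.e. $\hat a_i = U_i/V_i$, which is exactly the claimed formula; the second derivative $-U_i/a_i^2-(V_i-U_i)/(1-a_i)^2<0$ confirms that this interior critical point is the unique maximiser, the objective being strictly concave on $(0,1)$ when $0<U_i<V_i$. The expression for $\hat b_i$ follows verbatim from the symmetry of the two sectors, with the roles of $(y^1,y^2)$ interchanged. The one step requiring care — and the only place where the model's structure genuinely enters — is the decoupling: one must verify that the partition property of the $h_i$ together with the balance identity (\ref{xy}) really separates the parameters, so that no cross-terms couple distinct $a_i$'s or mix the $a$'s with the $b$'s; once that separation is established the remaining optimisation is routine.
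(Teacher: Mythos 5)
Your proposal is correct and follows essentially the same route as the paper: both form the product-of-binomials likelihood, exploit the partition property of the indicators $h_i$ together with the balance identity $x^1_{t+1}+y^1_{t+1}=x^1_t$ to isolate each parameter, and solve the resulting scalar first-order condition (the paper differentiates first and then applies the identity $1/a(y_t)=\sum_i h_i/a_i$, whereas you decouple the log-likelihood into per-regime Bernoulli pieces before differentiating, which is the same computation in a different order). Your added second-derivative/concavity check is a small improvement over the paper, which only verifies the first-order condition.
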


\begin{proof}
We prove the expression for $\hat{a}_0$ here 
and the proof for the others are similar.
The likelihood function $L(a,b \mid x_0, x_1, \ldots, x_N, y_0, y_1, \dots, y_N)$ is then the joint probability density function
$f(x_0, x_1, \ldots, x_N, y_0, y_1, \ldots, y_N \mid a,b)$:
$$
\begin{array}{llllll}
&&L(a, b \mid x_0, x_1, \ldots, x_N, y_0, y_1, \ldots, y_N)\\
&=& f(x_0, x_1, \ldots, x_N, y_0, y_1, \ldots, y_N \mid a, b)\\
&=&\left(
\begin{array}{c}
x^1_0\\
x^1_1
\end{array}
\right)
(1-a(y_0))^{x^1_1}{a(y_0)}^{y^1_1}
\times\left(
\begin{array}{c}
x^2_0\\
x^2_1
\end{array}
\right)
(1-b(y_0))^{x^2_1}{b(y_0)}^{y^2_1}\\
&&\times\left(
\begin{array}{c}
x^1_1\\
x^1_2
\end{array}
\right)
(1-a(y_1))^{x^1_2}{a(y_1)}^{y^1_2}
\times\left(
\begin{array}{c}
x^2_1\\
x^2_2
\end{array}
\right)
(1-b(y_1))^{x^2_2}{b(y_1)}^{y^2_2}
\ldots\ldots\\
&&\times\left(
\begin{array}{c}
x^1_{N-1}\\
x^1_{N}
\end{array}
\right)
(1-a(y_{N-1}))^{x^1_{N}}{a(y_{N-1})}^{y^1_N}
\times\left(
\begin{array}{c}
x^2_{N-1}\\
x^2_N
\end{array}
\right)
(1-b(y_{N-1}))^{x^2_N}{b(y_{N-1})}^{y^2_N}.
\end{array}
$$
Then by solving
$$
\frac{\partial \ln L(a, b \mid x_0,x_1,\ldots,x_N, y_0,y_1,\ldots,y_N )}{\partial a_0} = 0 \ ,
$$
we have
$$
-\sum\limits_{t=0}^{N-1}\frac{x^1_{t+1}h_0(y^1_t,y^2_t)}{1-a(y_t)}+\sum\limits_{t=0}^{N-1}\frac{y^1_{t+1}h_0(y^1_t,y^2_t)}{a(y_t)}=0.
$$
Since for any $t$,
$$
\frac{1}{1-a(y_t)}=\sum\limits_{i=0}^{3}\frac{h_i(y^1_t,y^2_t)}{1-a_i}
\quad {\rm and} \quad
\frac{1}{a(y_t)}=\sum\limits_{i=0}^{3}\frac{h_i(y^1_t,y^2_t)}{a_i} ,
$$
then
$$
\begin{array}{lll}
0&=&
\displaystyle -\sum\limits_{t=0}^{N-1}\sum\limits_{i=0}^{3}
\frac{x^1_{t+1}h_0(y^1_t,y^2_t)h_i(y^1_t,y^2_t)}{1-a_i}+
\sum\limits_{t=0}^{N-1}\sum\limits_{i=0}^{3}
\frac{y^1_{t+1}h_0(y^1_t,y^2_t)h_i(y^1_t,y^2_t)}{a_i}\\
&=& \displaystyle -\sum\limits_{t=0}^{N-1}
\frac{x^1_{t+1}h_0(y^1_t,y^2_t)}{1-a_0}+
\sum\limits_{t=0}^{N-1}
\frac{y^1_{t+1}h_0(y^1_t,y^2_t)}{a_0}.
\end{array}
$$
Thus,
$$
\hat{a}_0=\frac{\sum\limits_{t=0}^{N-1}y^1_{t+1}h_0(y^1_t, y^2_t)}{\sum\limits_{t=0}^{N-1}x^1_{t}h_0(y^1_t, y^2_t)}
$$
\end{proof}

\section{Crisis VaR and Crisis ES}

In this section, we give a brief introduction to the concepts of the CRVaR and the CRES in Ching et al. (2010).
Then we present the evaluation of the CRVaR and the CRES using the proposed models.
The CRVaR and the CRES are measures for the duration and the severity of a default crisis.
Let
$$
L (\cdot, \cdot) (\omega) : {\cal T} \times {\cal R} \times \Omega \rightarrow {\cal R}
$$
be a real-valued function $L (T, W_T) (\omega)$ of $T$ and $W_T$.
We then suppose that for a fixed $\omega \in \Omega$,
$$
T(\omega) = t, \quad W_t (\omega) = w, \quad {\rm and} \quad L (t, w) (\omega) = l (t, w) \in {\cal R}.
$$
That is, the loss from the default crisis is $l (t, w)$ when the
duration of default crisis $T = t$ and the number of defaulted bonds
in the crisis $W_t = w$.
We write ${\cal L} (T, W_T)$ for the space of all loss functions
$L (T, W_T) (\omega)$ generated by $T$ and $W_T$.

The CRVaR with probability level $\beta$ under ${\cal P}$ is then defined as
a functional $V_{\beta} (\cdot) : {\cal L} (T, W_T) \rightarrow {\cal R}$ such that
for each $L (T, W_T) \in {\cal L} (T, W_T)$,
\begin{eqnarray}
V_{\beta} (L (T, W_T)) := \inf \{ l \in {\cal R} | {\cal P} (L (T, W_T) > l )  \le \beta \} \ .
\end{eqnarray}
In the language of statistics, $V_{\beta} (L (T, W_T))$ is the
generalized $\beta$-quantile of the distribution of the loss
variable $L (T, W_T)$ under ${\cal P}$. Since the loss from the
default crisis $L(T, W_T)$ is completely determined when $T$ and
$W_T$ are given, ${\cal P} (L (T, W_T) > l )$ is completely
determined by the joint p.d.f. of $W_T$ and $T$.

The CRES with probability level $\beta$ under ${\cal P}$ is also defined as
a functional $E_{\beta} (\cdot) : {\cal L} (T, W_T) \rightarrow {\cal R}$ such that
for each $L (T, W_T) \in {\cal L} (T, W_T)$,
\begin{eqnarray}
E_{\beta} (L (T, W_T) ) := E_{\cal P} [ L (T, W_T) | L (T, W_T) \ge V_{\beta} (L (T, W_T)) ].
\end{eqnarray}
In other words, $E_{\beta} (L (T, W_T) )$ is the average of the loss from the
default crisis when the loss exceeds the CRVaR of the default crisis with
probability level $\beta$ under ${\cal P}$.
\vspace{5mm}

\section{Empirical Results for Proposed Model}

In this section we present the empirical results of the proposed two-sector model using real default data extracted from
the figures in Giampieri et al. (2005), where we adopt the estimation methods and techniques presented in the previous section.

The default data comes from four different sectors.
They include consumer/service sector, energy and natural resources sector,
leisure time/media sector and transportation sector.
Table 1 shows the default data taken from Giampieri et al. (2005).
From the table, the proportions of defaults for Consumer, Energy, Media and Transport
are $24.1\%$, $16.9\%$, $20.5\%$ and $21.0\%$, respectively.
The default probabilities of all four sectors are significantly greater than zero.
This means that the default risk of each of the four sectors is substantial.
\begin{table}
\centering
\begin{tabular}{|c|c|c|}
\hline
{\bf Sectors}   & {\bf Total} & {\bf Defaults}  \\
\hline
{\bf Consumer} & 1041  & 251 \\
{\bf Energy}   & 420   & 71  \\
{\bf Media}    & 650   & 133 \\
{\bf Transport}& 281   & 59  \\
\hline
\end{tabular}
\caption{The default data (Taken from Giampieri et al. (2005)).}
\end{table}

We then construct the infectious disease model using these real data.
The asterisk ``*'' in the table indicates the pair of sectors
which has the largest correlation. From Table 2, we see that all correlations are positive.
This provides some preliminary evidence for supporting
the use of the two-sector model from the perspective of
descriptive statistical analysis.
We shall provide more empirical evidence for supporting the use of the proposed infectious
model by the results of Bayesian Information Criterion (BIC) later in this section.
To build the infectious model, for each row (sector A), we may find a
partner (sector B) by searching the one with the largest correlation
in magnitude (ie, the one with the asterisk ``*'').
Figure 1 gives the partner relations among the sectors using correlation.
Later in this section, we will give the results for BIC to support the matched pair presented in figure 1.
The estimation results for proposed infectious model and two-sector model Ching et al. (2010) are presented in Table 3.

\begin{figure}
\centering
\begin{picture}(200,200)(50,0)
\put(60, 35){Transport} \put(235, 35){Media}
\put(75, 170){Energy} \put(245, 170){Consumer}
\put(235,38){\vector(-1,0){120}}
\put(260,165){\vector(0,-1){120}}
\put(260,45){\vector(0,1){120}}
\put(245,45){\vector(-1,1){120}}
\end{picture}
\caption{The partner relations among the sectors using correlation.}
\end{figure}
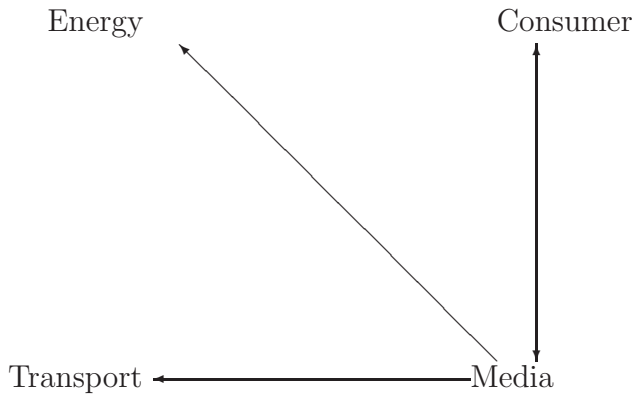

\begin{table}
\begin{center}
\begin{tabular}{|c|c|c|c|c|}
\hline
        & {\bf Consumer} & {\bf Energy} & {\bf Media} & {\bf Transport} \\
\hline
{\bf Consumer} & -          & 0.0224 &0.6013$^*$ &0.3487\\
{\bf Energy}   & 0.0224     & -      &0.1258$^*$ &0.1045\\
{\bf Media}    & 0.6013$^*$ &0.1258  &-          &0.3708\\
{\bf Transport}& 0.3487     &0.1045  & 0.3708$^*$&-\\
\hline
\end{tabular}
\end{center}
\caption{Correlations of the sectors.}
\end{table}

\begin{table}[h]
\centering
\begin{tabular}{|c|c|c|c|c|}
  \hline
  \textbf{Sector A}  & \textbf{Consumer}   & \textbf{Energy}  & \textbf{Media}   & \textbf{Transport}  \\
  \textbf{Sector B}  & \textbf{Media}     & \textbf{Media}   & \textbf{Consumer} & \textbf{Media}\\
  \hline \hline
  \multicolumn{5}{|c|}{Proposed Model}\\
  \hline
  $a_0$         & 0.0007      & 0.0004  & 0.0005  & 0.0013 \\
  $a_1$        & 0.0018       & 0.0033  & 0.0005   & 0.0012 \\
  $a_2$         & 0.0013      & 0.0018  & 0.0017   & 0.0026\\
  $a_3$        & 0.0049       & 0.0032  & 0.0042   & 0.0052 \\
  \hline \hline
  \multicolumn{5}{|c|}{Two-sector Model \cite{Ching1}}\\
  \hline
$\alpha_0$   & 0.0013   &0.0018       &0.0005   &0.0013\\
$\alpha_1$   & 0.0043   &0.0023       &0.0033   &0.0036\\
\hline
\end{tabular}
\caption{Estimation Results for Proposed Model}
\end{table}

To compare the proposed infectious model with the two-sector model Ching et al. (2010), we consider the Bayesian information criterion (BIC).The formula for the BIC is
$$
{\rm BIC}=-2log(L)+klog(n),
$$
where $n$ is the number of observation data, $k$ is the number of free parameters to be estimated, and $L$ is the maximized value of the likelihood function for the estimated model. Given any two estimated models, the model with the lower value of BIC is the one to be preferred. Table 4 presents the value of the BIC for the proposed model and the two-sector Ching et al. (2010). We remark that for all the four sectors, the proposed model with lower value of BIC is statistically better.
\begin{table}[h]
\centering
\begin{tabular}{|c|c|c|c|c|}
  \hline
  \textbf{Sector A}   & \textbf{Consumer} & \textbf{Energy} & \textbf{Media}    & \textbf{Transport} \\
    \textbf{Sector B}   & \textbf{Media}    & \textbf{Media}  & \textbf{Consumer} & \textbf{Media} \\
   \hline \hline
   BIC(proposed model)                     & 419.0813     & 215.4654      & 301.2534      &  2.1287  \\
  \hline
   BIC(two-sector model Ching et al. (2010))     & 434.6700     & 231.8225      & 321.0501       &  2.1460  \\
  \hline
\end{tabular}
\caption{The Value of BIC for Proposed Model and Two-sector Model Ching et al. (2010)}
\end{table}

To compare the matched pairs in Figure 1 with other matched pairs for the proposed model, we also adopt the BIC.
Since the models of different matched pairs have the same number of parameters and length of data set, to compare their BIC is equivalent to
compare their log-likelihood ratio.
Table 4 presents the log-likelihood ratios for the matched pairs in Figure 1 against other matched pairs.
We remark that all the log-likelihood ratios are positive which support the matched pairs in Figure 1 for the proposed model.

\begin{table}
\centering
\begin{tabular}{|c|c|c|c|c|}
\hline
\multicolumn{5}{|c|}{Matched Pairs in Figure 1}\\
\hline
   \textbf{Sector A}   & \textbf{Consumer} & \textbf{Energy} & \textbf{Media}    & \textbf{Transport} \\
  \textbf{Sector B}   & \textbf{Media}    & \textbf{Media}  & \textbf{Consumer} & \textbf{Media} \\
  \hline \hline
\multicolumn{5}{|c|}{Other Matched Pairs} \\
\hline
  \textbf{Sector A}   & \textbf{Consumer} & \textbf{Energy} & \textbf{Media}    & \textbf{Transport} \\
  \textbf{Sector B}   & \textbf{Energy}    & \textbf{Consumer}  & \textbf{Energy} & \textbf{Consumer} \\
   \hline
  log-likelihood ratio   & 33.1330     & 7.3286      & 18.6264      &  1.9942  \\
  \hline
  \textbf{Sector A}   & \textbf{Consumer}  & \textbf{Energy}     & \textbf{Media}      & \textbf{Transport} \\
  \textbf{Sector B}   & \textbf{Transport} & \textbf{Transport}  & \textbf{Transport}  & \textbf{Energy} \\
   \hline
  log-likelihood ratio   & 10.7231     & 7.3495      & 14.6136      &  8.4934  \\
  \hline
\end{tabular}
\caption{The Value of BIC for Matched pairs in Figure 1 and Other Matched Pairs}
\end{table}

Our proposed model aims at modeling causality of defaults in both direction.
From the pair up results, one may found that the relation is not necessarily symmetric.
This relation is only found symmetric for the sectors media and consumer, \
which means the causality of defaults from both direction is more reasonable for the media and consumer sector.

We provide a scatter plot to depict the correlation of defaults in 
the matched sectors. A simulation of defaults in matched sectors
in our proposed model is also conducted.
Figure 2 presents the number of surviving bonds in 
the matched sectors of empirical data and simulation.

\begin{figure}[h]
\centering
		\resizebox{7cm}{7cm}{\includegraphics{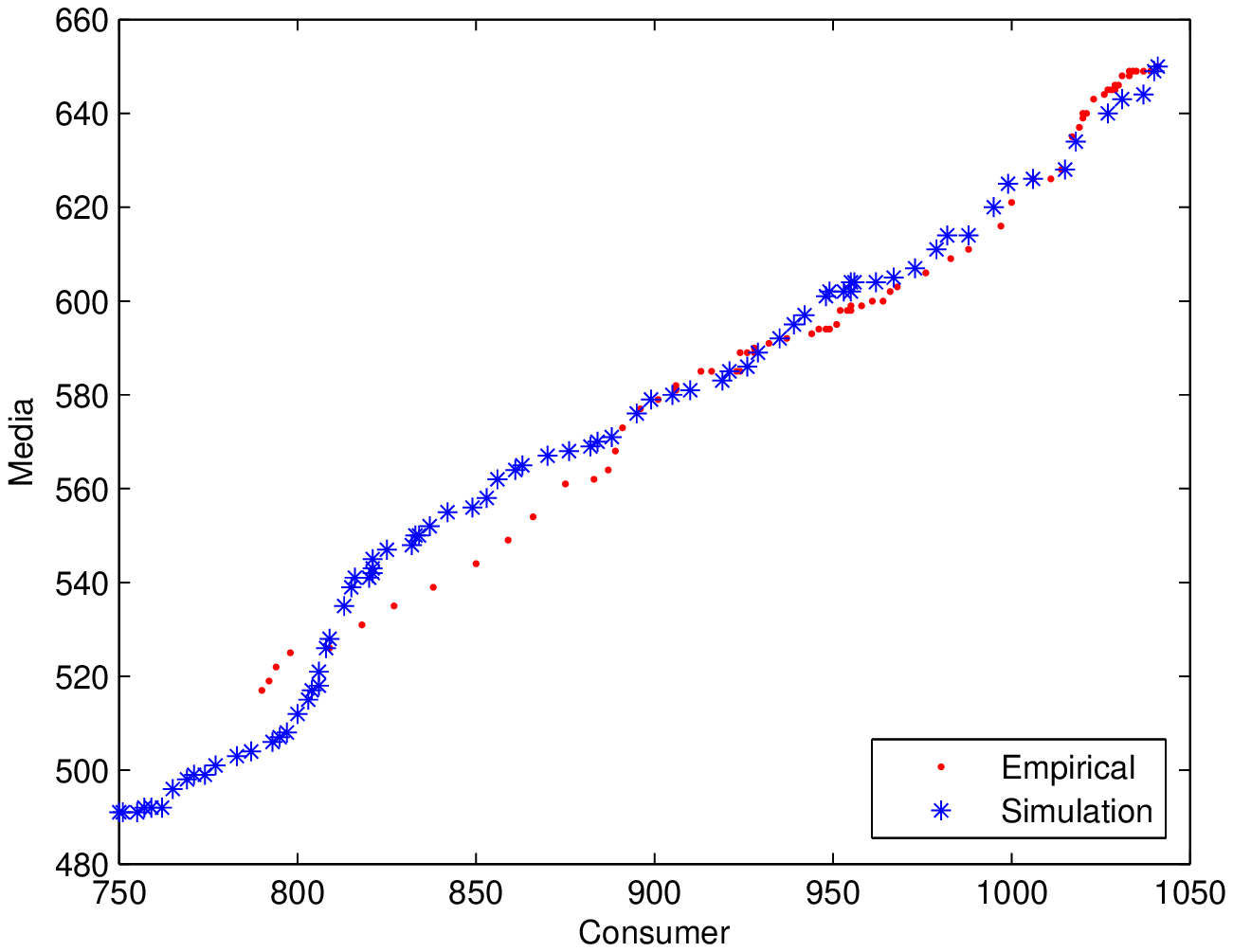}}
		\resizebox{7cm}{7cm}{\includegraphics{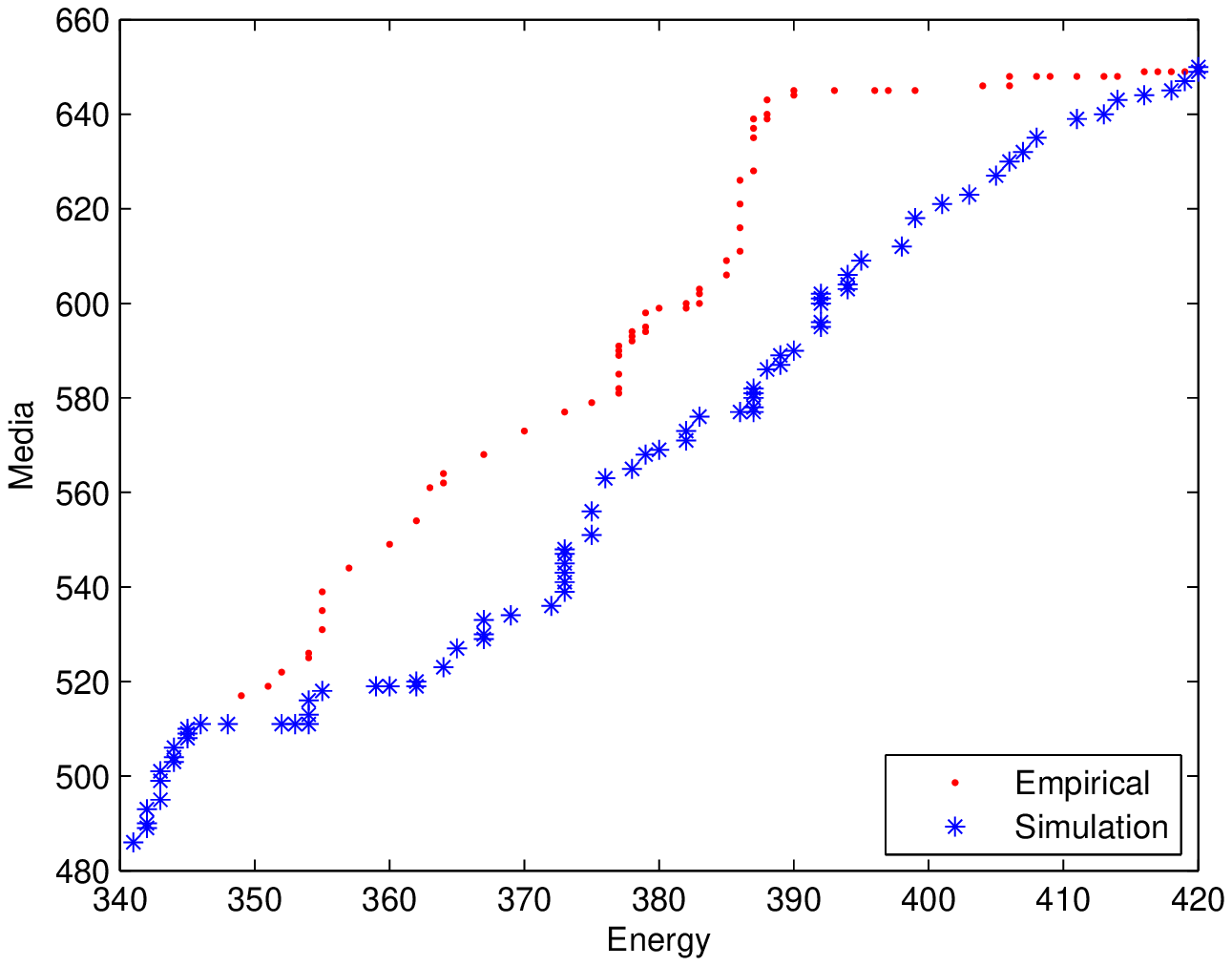}}\\
		\resizebox{7cm}{7cm}{\includegraphics{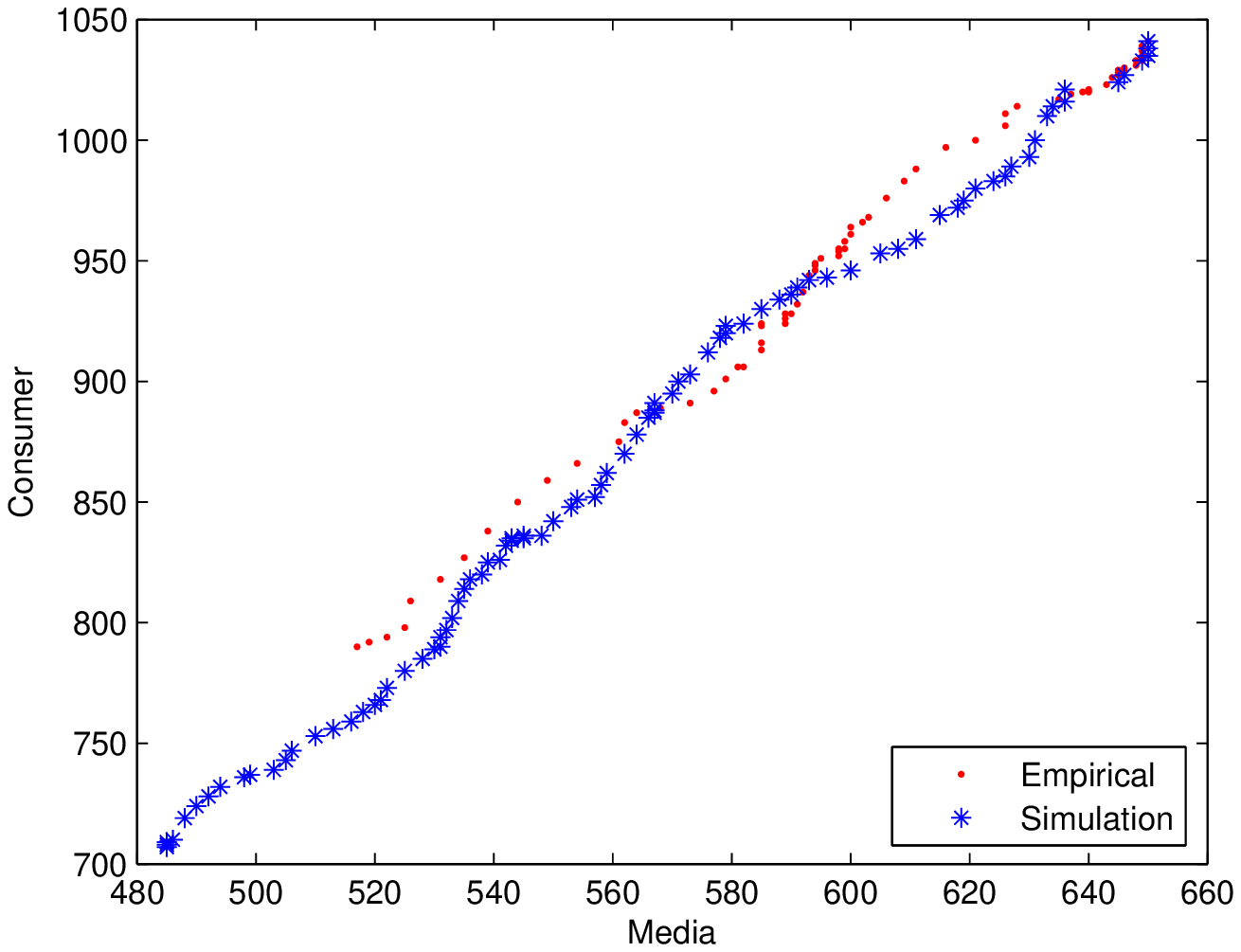}}
		\resizebox{7cm}{7cm}{\includegraphics{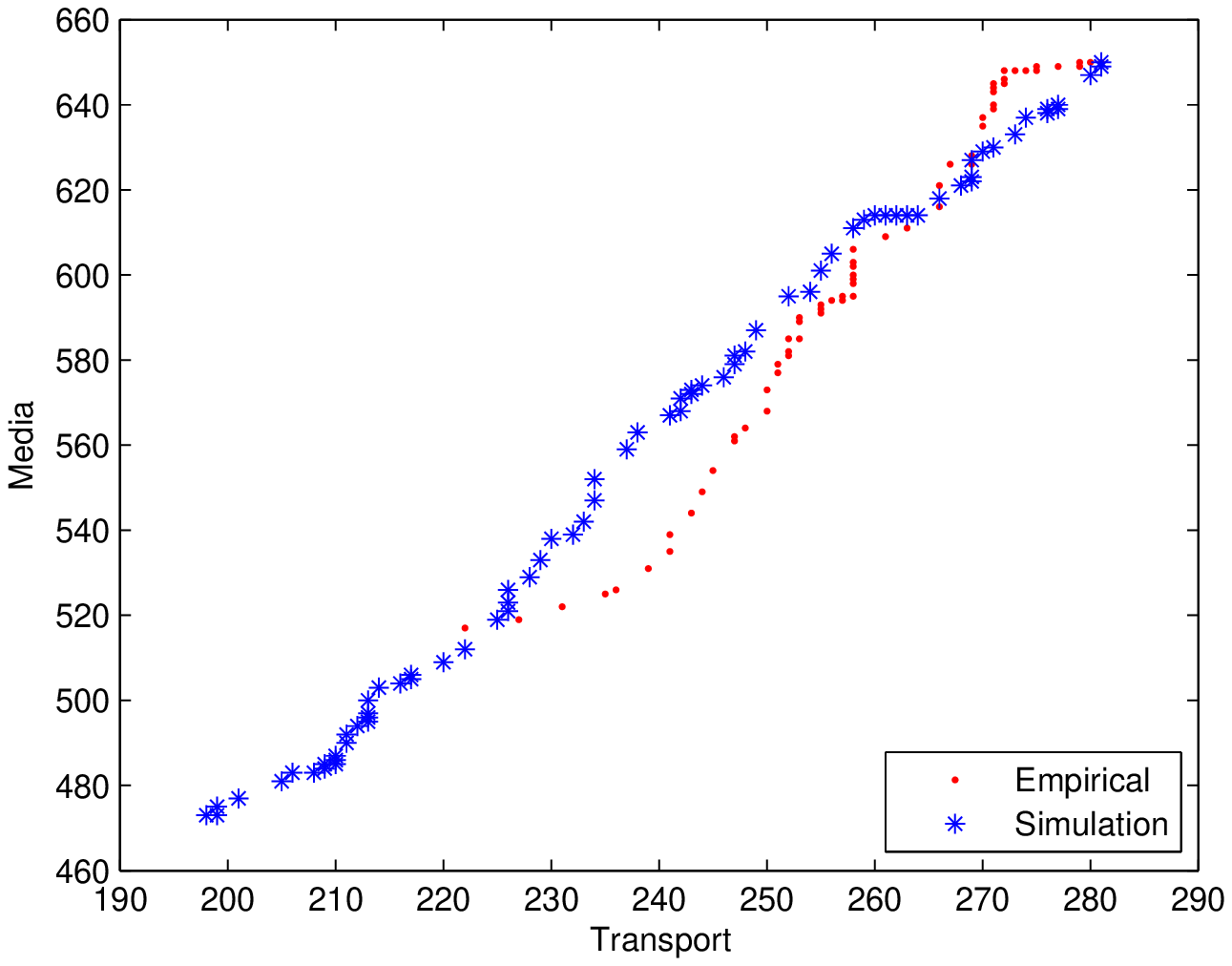}}
	\caption{Number of Surviving Bonds in Matched Sectors.}
\end{figure}
\vspace{5mm}

To apply the two measures CRVaR and CRES in the proposed model,
we consider some hypothetical values for the loss.
The loss $L(W_T,T )$, for each $T = 1, 2, \ldots, X_0$ and $W_T = 0, 1, \ldots, X_0$, are as in (\ref {loss}).
Then we present the value of CRVaR and CRES for the proposed model as well as the two-sector model Ching et al. (2010) in Table 6.
And the loss distribution are presented in figure 3.
\begin{eqnarray}\label{loss}
\left\{
\begin{array}{l}
L(0,j) = j - 1 + 0.1, \ {\rm for \ each} \ j = 1, 2, \ldots, X_0;\\
L(i,j) = L(0,j) + i - 1, \ {\rm for \ each} \ i = 1, 2, \ldots, X_0 \ {\rm and} \ j = 1, \dots, X_0.\\
\end{array}
\right.
\end{eqnarray}

\begin{table}[h]
\centering
\begin{tabular}{|c|c|c|c|c|}
 \hline
  \textbf{Sector A}   & \textbf{Consumer}   & \textbf{Energy}  & \textbf{Media}     & \textbf{Transport}   \\
  \textbf{Sector B}   & \textbf{Media}      & \textbf{Media}   & \textbf{Consumer}  & \textbf{Media} \\
  \hline \hline
  \multicolumn{5}{|c|}{Proposed Model}\\
  \hline
  CRVaR$(\beta=0.05)$           &374.1     &  25.1     & 122.1   &  26.1  \\
  CRES$(\beta=0.05)$            &424.7     & 33.8      & 150.4   & 33.8   \\
 \hline
  CRVaR$(\beta=0.01)$           &457.1     &  39.1     & 168.1   & 39.1   \\
  CRES$(\beta=0.01)$            &495.1     & 47.5      &192.4    &  46.5  \\
 \hline \hline
 \multicolumn{5}{|c|}{Two-sector Model Ching et al. (2010)}\\
 \hline
  CRVaR$(\beta=0.05)$           & 114.1    &  12.1     & 34.1   &  10.10  \\
  CRES$(\beta=0.05)$            & 146.1   & 17.1      & 45.7   & 14.1   \\
 \hline
  CRVaR$(\beta=0.01)$           &166.1    &  20.1    & 52.1   & 16.1   \\
  CRES$(\beta=0.01)$            & 195.6   & 24.5      &63.3    &  20.2  \\
 \hline
\end{tabular}
\caption{CRVaR and CRES}
\end{table}

\begin{figure}[h]
\centering
		\resizebox{7cm}{7cm}{\includegraphics{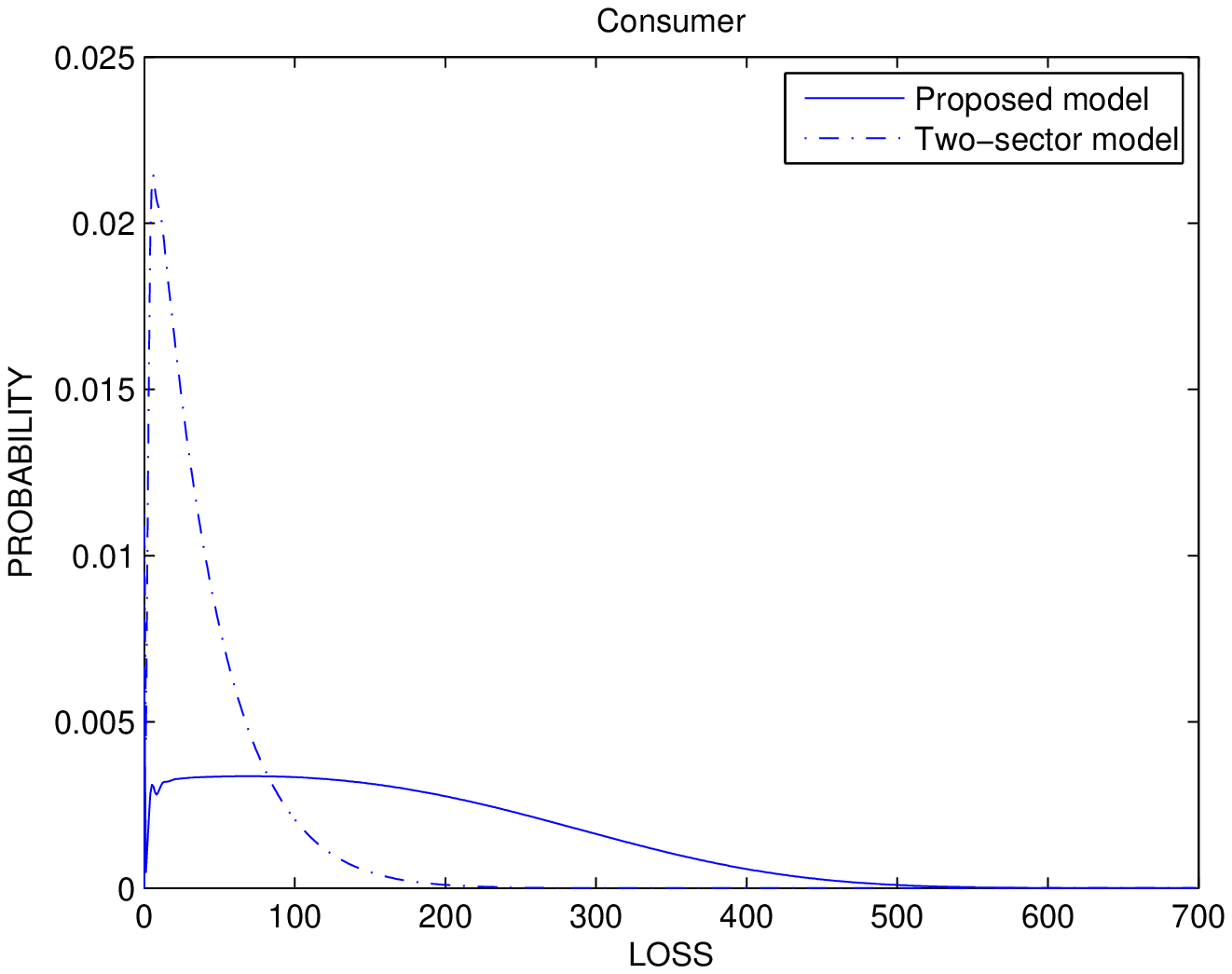}}
		\resizebox{7cm}{7cm}{\includegraphics{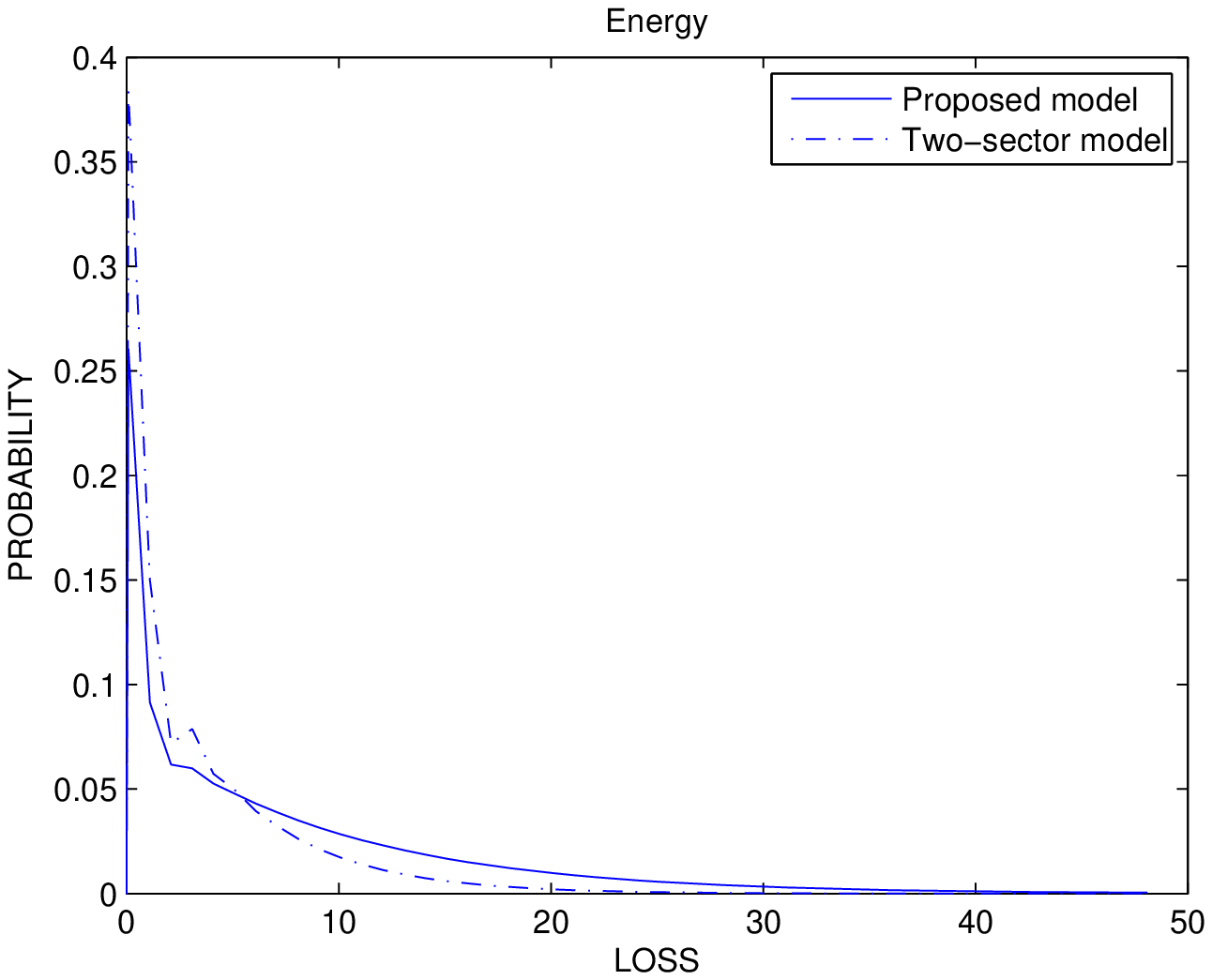}}\\
		\resizebox{7cm}{7cm}{\includegraphics{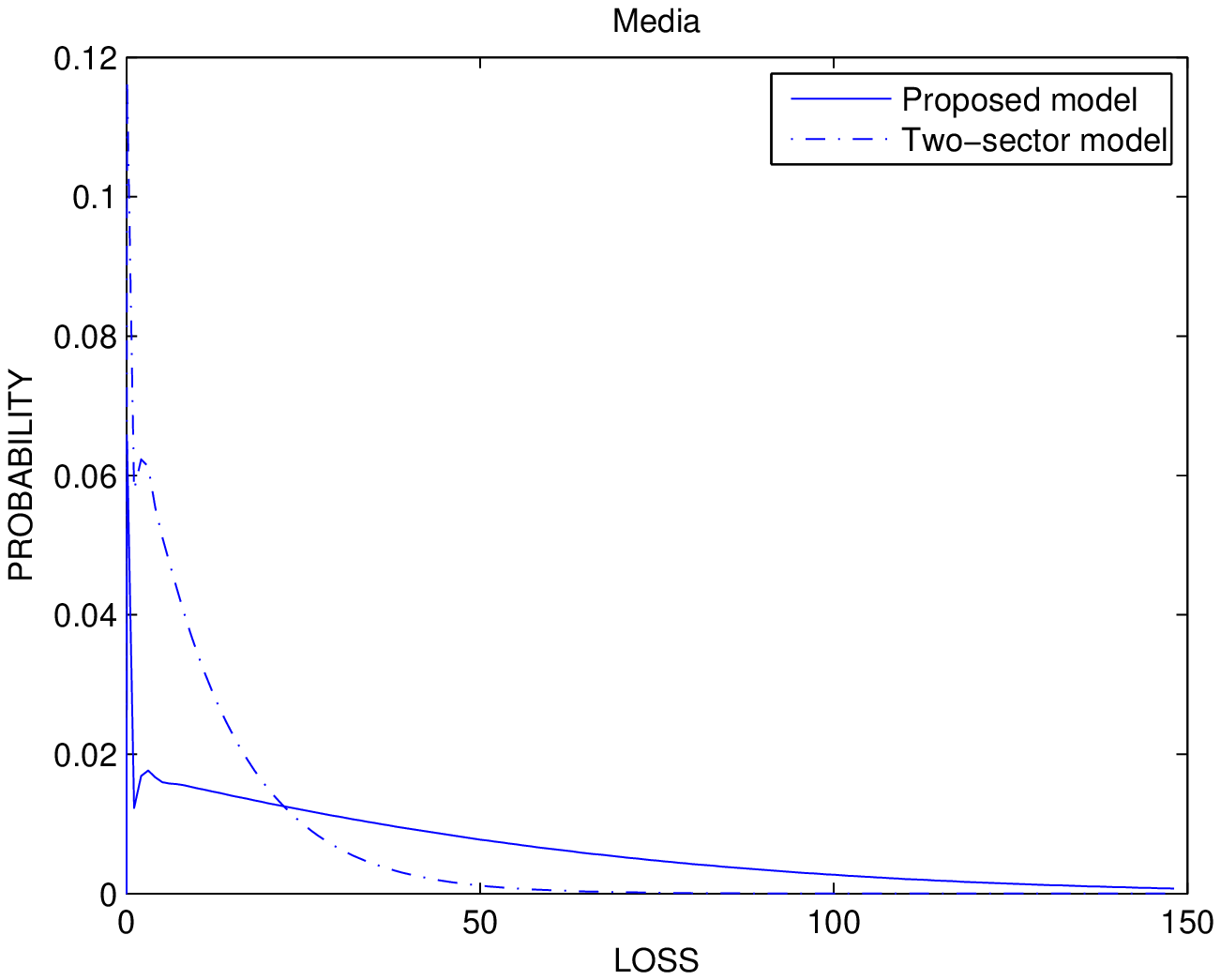}}
		\resizebox{7cm}{7cm}{\includegraphics{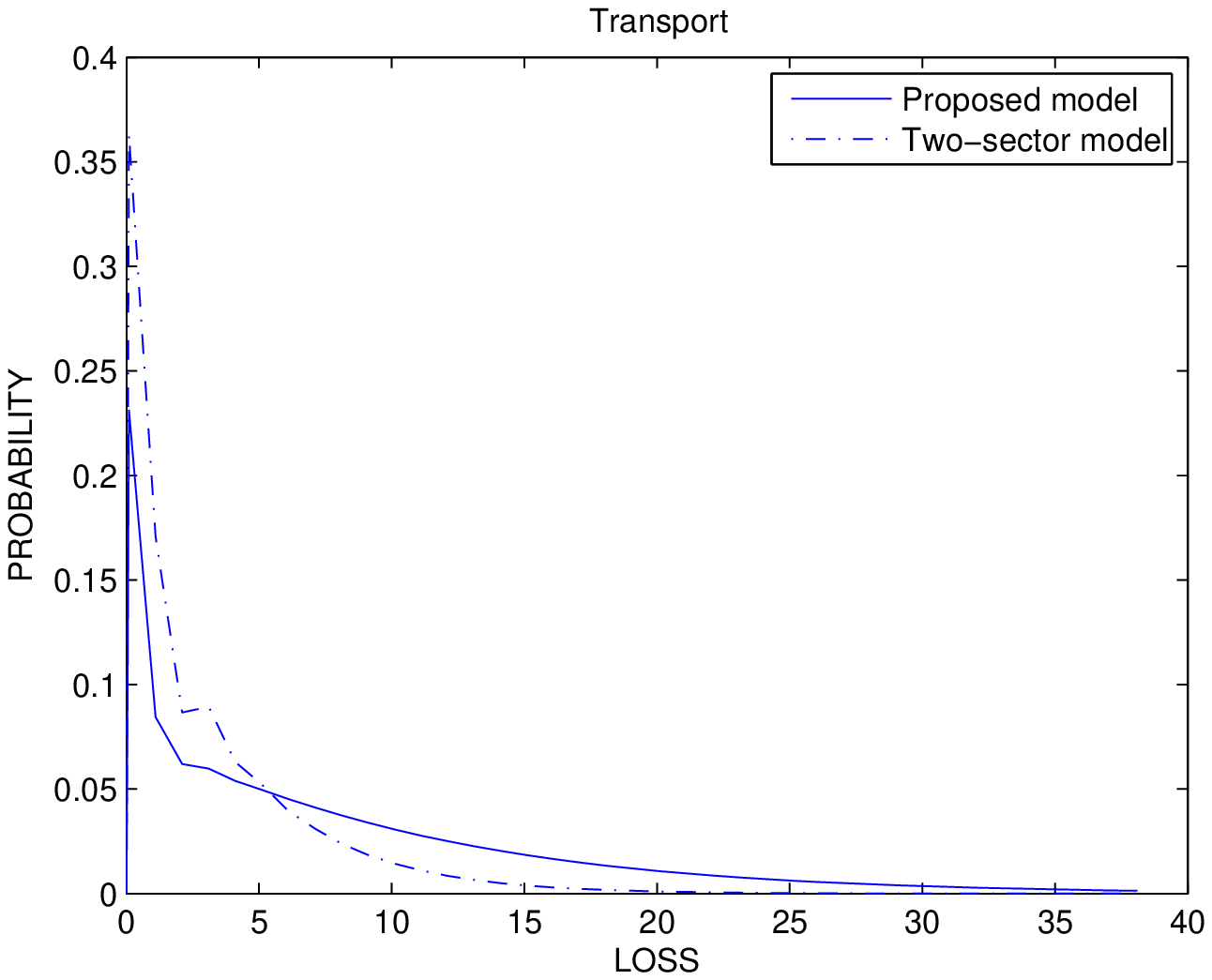}}
	\caption{Loss distribution for proposed model and two-sector model Ching et al. (2010).}
\end{figure}
\vspace{5mm}

From Table 6, we see that for all of the four sectors, the existing two-sector model underestimates
both the CRES and CRVaR.
This reflects that failure to incorporate the contagion effect described in
our proposed model leads to an underestimation of credit risk and has important consequences for
credit risk management, such as inadequate capital charges for credit portfolios. Indeed,
the loss distribution implied by the proposed model has a much fatter tail than that
arising from the existing two-sector model. This explains why the proposed model
provides more prudent estimates for the risk measures than the existing two sector
model via incorporating contagion. We also remark that the contagion model
including the causality of defaults in both direction, (i.e., looping defaults),
has a significant impact on the loss distribution.

\section{A Generalized Model}
As in the basic model, the stochastic process $(X_t, Y_t)$ has the Markov property, where conditioning on $(X_t, Y_t)$, $(X^1_{t+1}, Y^1_{t+1})$ and $(X^2_{t+1}, Y^2_{t+1})$ are stochastically independent.
The joint probability distribution, given the realization of $(X_t, Y_t), \alpha_t, \beta_t$, is given by:
\begin{eqnarray}\label{prob1}
\begin{array}{lll}
p_{(x_t,y_t), (\alpha_t, \beta_t)} (x_{t+1},y_{t+1})  & =&
P\{(X_{t+1}, Y_{t+1}) = (x_{t+1}, y_{t+1}) \mid (X_t, Y_t) = (x_t, y_t),\alpha_t, \beta_t\} \\
&=&  \left(
\begin{array}{c}
x^1_t \\
y^1_{t+1}
\end{array}
\right) {(\alpha}_t)^{y^1_{t+1}} (1-{\alpha}_t)^{x^1_{t+1}}
\times \left(
\begin{array}{c}
x^2_t \\
y^2_{t+1}
\end{array}
\right) {(\beta}_t)^{y^2_{t+1}} (1-{\beta}_t)^{x^2_{t+1}}.\\
\end{array}
\end{eqnarray}
However, instead of maintaining the specific form of a bivariate step function for $\alpha_t, \beta_t$, this model assumes that $\alpha_t$ and $\beta_t$ follow certain Beta distributions depending on $(X_t, Y_t)$.
By assuming a beta density on the unknown transition parameters, the chain becomes a Markov chain with transition matrix containing random parameters. This 
allow us to incorporate parameter uncertainty while, at the same time, retaining  the analytical tractability of the model. For each time period,
the number of defaults has the Beta-binomial distribution depending on the number of defaults in last time period.
The Beta-binomial distribution is extensively used in Bayesian statistics, empirical Bayes methods and classical statistics as an overdispersed binomial distribution. 

Specifically, it is assumed that the density of $\alpha_t$ and $\beta_t$ are given by $f_{\alpha}(x; (X_t, Y_t))$ and $f_{\beta}(x; (X_t, Y_t))$, respectively, and
$$
f_{\alpha}(x; (X_t, Y_t))=\sum_{i=0}^3\frac{h_i(y^1_t, y^2_t)}{B(A_{i1},A_{i2})}x^{A_{i1}-1}(1-x)^{A_{i2}-1}
$$
and
$$
f_{\beta}(x; (X_t, Y_t))=\sum_{i=0}^3\frac{h_i(y^2_t, y^1_t)}{B(B_{i1},B_{i2})}x^{B_{i1}-1}(1-x)^{B_{i2}-1}
$$
where
$$
B(x,y)=\int_0^1 t^{x-1} (1-t)^{y-1} dt
$$
and $A_{ij}, B_{ij}, i=0,1,2,3, j=1,2$ are parameters of the Beta distribution.
From the definition, one can have the following transition probability:

\begin{equation}\label{trans1}
\begin{array}{ll}
&P\{(X^1_{t+1}, Y^1_{t+1})=(x^1_{t+1}, y^1_{t+1}) \mid (X_{t}, Y_{t})=(x_{t}, y_{t})\}\\
=&  \left(
\begin{array}{c}
x^1_t \\
y^1_{t+1}
\end{array}
\right) {E\left[({\alpha}_t)^{y^1_{t+1}} (1-{\alpha}_t)^{x^1_{t+1}} \mid (X_{t}, Y_{t})=(x_{t}, y_{t})\right]}\\
=&  \left(
\begin{array}{c}
x^1_t \\
y^1_{t+1}
\end{array}
\right) \displaystyle \int_0^1 p^{y^1_{t+1}}(1-p)^{x^1_{t+1}} f_{\alpha}(p; (x_t,y_t)) dp\\
=& \left(
\begin{array}{c}
x^1_t \\
y^1_{t+1}
\end{array}
\right){\displaystyle \sum_{i=0}^3}\displaystyle \frac{h_i(y^1_t, y^2_t)}{B(A_{i1}, A_{i2})}\int_0^1 p^{y^1_{t+1}}(1-p)^{x^1_{t+1}} p^{A_{i1}-1}(1-p)^{A_{i2}-1} dp\\
=& \left(
\begin{array}{c}
x^1_t \\
y^1_{t+1}
\end{array}
\right){\displaystyle \sum_{i=0}^3} h_i(y^1_t, y^2_t)
\displaystyle  \frac{B(y^1_{t+1}+A_{i1}, x^1_{t+1}+A_{i2})}{B(A_{i1}, A_{i2})}
\end{array}
\end{equation}

A similar transition probability distribution is shared with the number of defaults in Sector B.
\begin{equation}\label{trans2}
\begin{array}{ll}
&P\{(X^2_{t+1}, Y^2_{t+1})=(x^2_{t+1}, y^2_{t+1}) \mid (X_{t}, Y_{t})=(x_{t}, y_{t})\}\\
=& \left(
\begin{array}{c}
x^2_t \\
y^2_{t+1}
\end{array}
\right){\displaystyle \sum_{i=0}^3} h_i(y^2_t, y^1_t)
\displaystyle \frac{B(y^2_{t+1}+B_{i1}, x^2_{t+1}+B_{i2})}{B(B_{i1}, B_{i2})}
\end{array}
\end{equation}
From the transition probability distribution, to obtain a closed-form solution for a maximum likelihood estimate is difficult, if not impossible. However, we can compute the maximum likelihood estimates using numerical optimization.

\subsection{Default Cycle and Severity}

To derive the joint distribution of $(W^i_{T_i}, T_i)$ for $i=1,2$, we repeat the same steps in Section 2 to compute
$$
P_n(x_1,x_2,h)=P\{T_1 \geq n+1, X^1_n=x_1, X^2_n=x_2, I_{\{Y^2_n>0\}}=h\}.
$$
\begin{lemma}
\begin{equation}\label{lemma21}
\begin{array}{ll}
 P_n(x_1,x_2,0)=& {\displaystyle \sum_{s_1>x_1}{s_1 \choose x_1}} \left[ \displaystyle P_{n-1}(s_1,x_2,0)\frac{B(s_1-x_1+A_{11}, x_1+A_{12})}{B(A_{11}, A_{12})}\frac{B(B_{21}, x_2+B_{22})}{B(B_{21}, B_{22})}\right.\\
& \displaystyle \left.+P_{n-1}{(s_1,x_2,1)}\frac{B(s_1-x_1+A_{31}, x_1+A_{32})}{B(A_{31}, A_{32})}\frac{B(B_{31}, x_2+B_{32})}{B(B_{31}, B_{32})} \right]
\end{array}
\end{equation}
\begin{equation}\label{lemma22}
\begin{array}{ll}
 P_n(x_1,x_2,1)=& {\displaystyle \sum_{s_1>x_1}\sum_{s_1>x_1}{s_1 \choose x_1} {s_2 \choose x_2}}\left[ P_{n-1}(s_1,s_2,0)
 \frac{B(s_1-x_1+A_{11}, x_1+A_{12})}{B(A_{11}, A_{12})} 
 \frac{B(s_2-x_2+B_{21}, x_2+B_{22})}{B(B_{21}, B_{22})}\right.\\
&\displaystyle \left.+P_{n-1}(s_1,s_2,1)\frac{B(s_1-x_1+A_{31}, x_1+A_{32})}{B(A_{31}, A_{32})}\frac{B(s_2-x_2+B_{31}, x_2+B_{32})}{B(B_{31}, B_{32})} \right]
\end{array}
\end{equation}
where the initial condition is given by
$$
P_0(x_1,x_2,h)=\left\{
\begin{array}{ll}
1, & (x_1,x_2,h)=(x^1_0,x^2_0,1)\\
0, & otherwise
\end{array}
\right.
$$
\end{lemma}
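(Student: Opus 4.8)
The plan is to mirror exactly the argument that established the Lemma of Section 2, replacing the elementary binomial one-step transition probabilities by the Beta-binomial transition probabilities derived in (\ref{trans1}) and (\ref{trans2}). Throughout I would lean on the Markov property of $(X_t, Y_t)$ together with the conditional independence of the two sectors asserted at the start of Section 5: given $(X_{n-1}, Y_{n-1})$, the one-step transitions of sectors A and B factorize. That factorization is precisely what permits the joint one-step transition to split into the product of a sector-A factor read off from (\ref{trans1}) and a sector-B factor read off from (\ref{trans2}).

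For $P_n(x_1,x_2,0)$ I would first apply the law of total probability, conditioning on the time-$(n-1)$ state summarized by $(X^1_{n-1}, X^2_{n-1}, I_{\{Y^2_{n-1}>0\}})=(s_1,x_2,h)$ and summing over $s_1>x_1$ and $h\in\{0,1\}$. The value $X^2_{n-1}=x_2$ is forced, because $I_{\{Y^2_n>0\}}=0$ means $Y^2_n=0$, so by (\ref{xy}) the surviving count in sector B is unchanged. The event $\{T_1\geq n\}$ appearing in the conditioning collapses to $\{Y^1_{n-1}>0\}$, after which the Markov property reduces each conditional probability to a one-step transition depending only on $(s_1,x_2,h)$.

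Next I would insert the factored transition and identify the correct Beta parameters from the step functions. Since $Y^1_{n-1}>0$ always holds, sector A lies in the $h_1$ regime when $h=0$ (parameters $A_{11},A_{12}$) and in the $h_3$ regime when $h=1$ (parameters $A_{31},A_{32}$), so its $s_1-x_1$ defaults contribute ${s_1 \choose x_1} B(s_1-x_1+A_{j1},x_1+A_{j2})/B(A_{j1},A_{j2})$. Because the arguments of $h_i$ are swapped for $\beta$, sector B with $Y^2_{n-1}=0$ lies in the $h_2$ regime ($B_{21},B_{22}$) and with $Y^2_{n-1}>0$ in the $h_3$ regime ($B_{31},B_{32}$); here $Y^2_n=0$, so its factor is simply $B(B_{j1},x_2+B_{j2})/B(B_{j1},B_{j2})$. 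Collecting the $h=0$ and $h=1$ terms yields (\ref{lemma21}). The derivation of (\ref{lemma22}) for $P_n(x_1,x_2,1)$ is identical, except that $Y^2_n>0$ now forces $X^2_{n-1}=s_2>x_2$ to be summed as well, and the sector-B factor carries $s_2-x_2$ defaults, producing ${s_2 \choose x_2} B(s_2-x_2+B_{j1},x_2+B_{j2})/B(B_{j1},B_{j2})$. The initial condition is immediate from $(X_0,Y_0)=(x_0,y_0)$ with $y^2_0>0$, which places all mass at $(x^1_0,x^2_0,1)$.

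I expect the only genuine difficulty to be organizational rather than analytic: keeping the four step-function regimes straight so that the right pair of Beta parameters is attached to each sector in each of the $h=0$ and $h=1$ branches, and verifying that the conditional independence of the sectors really does justify factorizing the joint transition now that $\alpha_t$ and $\beta_t$ are random rather than deterministic. Once (\ref{trans1}) and (\ref{trans2}) are granted, every remaining manipulation is the same bookkeeping already carried out in Section 2.
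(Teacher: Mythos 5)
Your proposal is correct and follows essentially the same route as the paper: the paper's proof likewise reuses the law-of-total-probability/Markov decomposition from the Section 2 lemma verbatim and then substitutes the Beta-binomial one-step transition probabilities from (\ref{trans1}) and (\ref{trans2}), with exactly the parameter-regime identification you describe ($A_{11},A_{12}$ and $B_{21},B_{22}$ for $h=0$; $A_{31},A_{32}$ and $B_{31},B_{32}$ for $h=1$). The only point you raise that the paper glosses over — whether the joint transition still factorizes when $\alpha_t,\beta_t$ are random — is resolved by the implicit assumption that they are conditionally independent given $(X_t,Y_t)$, which is exactly what the paper uses when it multiplies the two marginal transitions.
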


\begin{proof}
We prove the first equality. The proof of the second one is similar.
As in the proof of Lemma 1,
$$
\begin{array}{lll}
 & P_n(x_1,x_2,0)\\
=&{\displaystyle \sum_{s_1>x_1}\sum_{h=0,1}} P_{n-1}(s_1,x_2,h)\\
 & \times P\{Y^1_n>0, X^1_n=x_1, X^2_n=x_2, I_{\{Y^2_n >0\}}=0 \mid Y^1_{n-1}>0, X^1_{n-1}=s_1, X^2_{n-1}=x_2, I_{\{Y^2_{n-1} >0\}}=h\}
\end{array}
$$
Note that by (\ref{trans1}) and (\ref{trans2}),
$$
\begin{array}{ll}
& P\{Y^1_n>0, X^1_n=x_1, X^2_n=x_2, I_{\{Y^2_n >0\}}=0 \mid Y^1_{n-1}>0, X^1_{n-1}=s_1, X^2_{n-1}=x_2, I_{\{Y^2_{n-1} >0\}}=0\}\\
=& \displaystyle {s_1 \choose x_1}\frac{B(s_1-x_1+A_{11}, x_1+A_{12})}{B(A_{11}, A_{12})}\frac{B(B_{21}, x_2+B_{22})}{B(B_{21}, B_{22})}
\end{array}
$$
and
$$
\begin{array}{ll}
& P\{Y^1_n>0, X^1_n=x_1, X^2_n=x_2, I_{\{Y^2_n >0\}}=0 \mid Y^1_{n-1}>0, X^1_{n-1}=s_1, X^2_{n-1}=x_2, I_{\{Y^2_{n-1} >0\}}=1\}\\
=&  \displaystyle {s_1 \choose x_1}\frac{B(s_1-x_1+A_{31}, x_1+A_{32})}{B(A_{31}, A_{32})}\frac{B(B_{31}, x_2+B_{32})}{B(B_{31}, B_{32})}
\end{array}
$$
Combining these two results, (\ref{lemma21}) follows.
\end{proof}

Hence the joint distribution of $(T_1, W^1_{T_1})$ follows
\begin{proposition}
\begin{equation}\label{proposition2}
\begin{array}{ll}
P\{(T_1, W^1_{T_1})=(n,x))\}=&\displaystyle \sum_{x_2} P_{n-1}(x^1_0-x,x_2,0)\frac{B(A_{11}, x^1_0-x+A_{12})}{B(A_{11},A_{12})}\\
& \displaystyle +\sum_{x_2} P_{n-1}(x^1_0-x,x_2,1)\frac{B(A_{31}, x^1_0-x+A_{32})}{B(A_{31},A_{32})}
\end{array}
\end{equation}
\end{proposition}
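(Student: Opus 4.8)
The plan is to mirror exactly the argument used for Proposition 1 in Section 2, substituting the Beta-binomial one-step transition (\ref{trans1}) for the plain binomial factor that appeared there. The only genuinely model-specific ingredient is the single-step probability that sector A records no further default at time $n$; everything else is bookkeeping with the law of total probability and the Markov property.

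First I would rewrite the target event in terms of the processes themselves. Since $T_1=n$ means $Y^1_n=0$ with $Y^1_m>0$ for $m<n$, and since the severity $W^1_{T_1}$ equals the total number of sector-A defaults, $W^1_{T_1}=x$ is equivalent to $X^1_n=x^1_0-x$. Hence
$$
P\{(T_1,W^1_{T_1})=(n,x)\}=P\{T_1\geq n,\ Y^1_n=0,\ X^1_n=x^1_0-x\}.
$$
The key observation is that the constraint $Y^1_n=0$ together with the balance equation (\ref{xy}) forces $X^1_{n-1}=X^1_n=x^1_0-x$, so the surviving count in sector A is already pinned down at time $n-1$.

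Next I would apply the law of total probability, decomposing over the unknown portion of the time-$(n-1)$ state, namely the value $x_2$ of $X^2_{n-1}$ and the indicator $h=I_{\{Y^2_{n-1}>0\}}\in\{0,1\}$. Invoking the Markov property to replace the event $\{T_1\geq n\}$ by $\{Y^1_{n-1}>0\}$ in the conditioning (legitimate because the two events coincide on the relevant paths) produces the factor $P_{n-1}(x^1_0-x,x_2,h)$ for each $(x_2,h)$, matching the definition of $P_n(x_1,x_2,h)$ recalled at the start of this subsection.

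The one remaining quantity is the single-step transition probability $P\{Y^1_n=0,\ X^1_n=x^1_0-x \mid Y^1_{n-1}>0,\ X^1_{n-1}=x^1_0-x,\ X^2_{n-1}=x_2,\ I_{\{Y^2_{n-1}>0\}}=h\}$, which I would evaluate with (\ref{trans1}) by setting $x^1_t=x^1_{t+1}=x^1_0-x$ and $y^1_{t+1}=0$, so that the binomial coefficient collapses to $1$. Because $y^1_{n-1}>0$, only one step function $h_i(y^1_{n-1},y^2_{n-1})$ is nonzero: when $h=0$ (so $y^2_{n-1}=0$) it is $h_1$, selecting $(A_{11},A_{12})$ and giving the factor $B(A_{11},x^1_0-x+A_{12})/B(A_{11},A_{12})$; when $h=1$ (so $y^2_{n-1}>0$) it is $h_3$, selecting $(A_{31},A_{32})$ and giving $B(A_{31},x^1_0-x+A_{32})/B(A_{31},A_{32})$. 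Summing the two cases over $x_2$ assembles (\ref{proposition2}). The main obstacle, such as it is, lies entirely in this last step: one must correctly identify which index $i$ makes $h_i$ active in each case and confirm that plugging $y^1_{t+1}=0$ into (\ref{trans1}) reproduces precisely those Beta factors; the remainder is a transcription of the Proposition 1 computation.
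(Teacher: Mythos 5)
Your proposal is correct and follows essentially the same route as the paper: the same law-of-total-probability/Markov-property decomposition over $(x_2,h)$ inherited from Proposition 1, followed by evaluating the one-step sector-A transition via (\ref{trans1}) with $y^1_{t+1}=0$, where $h=0$ activates $h_1$ (parameters $A_{11},A_{12}$) and $h=1$ activates $h_3$ (parameters $A_{31},A_{32}$). Your added remarks --- that the balance equation pins down $X^1_{n-1}=x^1_0-x$ and that the binomial coefficient collapses to $1$ --- are correct details the paper leaves implicit.
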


\begin{proof}
As in the proof of Proposition 1,
$$
\begin{array}{lll}
 &P\{(T_1,W^1_{T_1})=(n,x)\}\\
=& {\displaystyle \sum_{x_2}\sum_{h=0,1}}P_{n-1}(x^1_0-x,x_2,h)\\
& \times P\{ Y^1_n=0, X^1_n=x^1_0-x \mid Y^1_{n-1}>0, X^1_{n-1}=x^1_0-x, X^2_{n-1}=x_2, I_{\{Y^2_{n-1} >0\}}=h\}
\end{array}
$$
Note that by (\ref{trans1}),
$$
\begin{array}{ll}
& P\{ Y^1_n=0, X^1_n=x^1_0-x \mid Y^1_{n-1}>0, X^1_{n-1}=x^1_0-x, X^2_{n-1}=x_2, I_{\{Y^2_{n-1} >0\}}=0\}\\
=& \displaystyle \frac{B(A_{11}, x^1_0-x+A_{12})}{B(A_{11}, A_{12})}
\end{array}
$$
and
$$
\begin{array}{ll}
& P\{ Y^1_n=0, X^1_n=x^1_0-x \mid Y^1_{n-1}>0, X^1_{n-1}=x^1_0-x, X^2_{n-1}=x_2, I_{\{Y^2_{n-1} >0\}}=1\}\\
=&  \displaystyle\frac{B(A_{31}, x^1_0-x+A_{32})}{B(A_{31}, A_{32})}
\end{array}
$$
Combining these two, (\ref{proposition2}) follows.
\end{proof}

\section{Concluding Remarks}

We propose a two-sector Markovian infectious model.
The proposed model incorporated two important features of
credit contagion, namely, the chain reactions of defaults
and the bi-lateral causality of defaults between two
industrial sectors.
We capture the chain reactions of defaults by postulating that the future default
probability switches over time according to the
current number of defaults of two industrial sectors.
The bi-lateral causality of defaults
meant that defaults in one sector are caused
by defaults in another sector, and vice versa.
This bi-lateral causality of defaults enriches
the dependent structures of credit risk model.
We provide an efficient estimation method
of the proposed model based on the maximum likelihood
estimation. Two important risk measures, namely,
the CRVaR and the CRES, were evaluated under the proposed model.
{ To provide a more flexible and realistic modeling framework
for the dynamics of default probabilities,
we extend the model to a case where default probabilities are
Beta random variables given the realization of the state in the
previous time period.}

We also conduct empirical studies on the credit risk
models using real default data. We adopted the BIC
to compare the proposed model with the existing two-sector model
proposed in Ching et al. (2010).
The numerical results reveal that the proposed two-sector model outperforms
empirically the existing model.
By comparing the risk measures evaluated from the proposed model
and those evaluated from the existing two-sector
model, we found that failure to incorporate
the contagion effect described in the proposed
model leads to an underestimation of risk measures.
This provides some evidence to support the proposed model.

One possible topic for future research may be to incorporate
the impact of the number of defaults on the likelihood
of future defaults via a different parametrization of
the future default probability.
In current paper, we assumed that the joint future default probability
switches over time depending on the region where
the current number of defaults falls in.
Four parameters, namely, $a_0$, $a_1$, $a_2$ and $a_3$ were involved.
To provide a more parsimonious way to incorporate
the current number of defaults on the joint
future default probability, one may consider the
following parametrization for the future default
probability:
\begin{eqnarray*}
\alpha_t = a_0 + a_1 y^1_t + a_2 y^2_t \ ,
\end{eqnarray*}
where $y^1_t$ and $y^2_t$ are the current numbers of defaults
in the two industrial sectors. Using this parametrization,
we can reduce the number of parameters by one and accounts
for more information of the current number of defaults
when evaluating the future default probability.

\vspace{5mm}
\noindent {\bf Acknowledgment}:
The authors would like to thank the anonymous referees for their helpful
comments. Research supported in part by RGC Grants 7017/07P, HKU CRCG Grants
and HKU Strategic Research Theme Fund on
Computational Physics and Numerical Methods.

\end{document}